
\documentclass[12pt]{article}
\usepackage[text={14cm, 20cm}]{geometry}
\linespread{1.5} 

\usepackage{amssymb,amsbsy,float}
\usepackage{graphicx}
\usepackage{amsfonts}
\usepackage{latexsym}
\usepackage{bibtopic}
\usepackage{amsmath}
\usepackage{amsthm}




\newtheorem{thm}{Theorem}[section]
\newtheorem{prop}[thm]{Proposition}
\newtheorem{lem}[thm]{Lemma}

\newtheorem{cl}[thm]{Claim}

\newcommand{\R}{\mathbb{R}}  




\newcommand{\finpreuve}{\ \rule{0.5em}{0.5em} \vspace{0.1 cm}}

\newcommand{\bx}{\mathbf{x}}
\newcommand{\bU}{\mathbf{U}}

\newcommand{\bb}{\mathbf{b}}\newcommand{\by}{\mbox{$\textbf{y}$}}%
\newcommand{\be}{\mathbf{e}}

\newcommand{\bq}{\mathbf{q}}
\newcommand{\bn}{\mathbf{n}}
\newcommand{\bz}{\mathbf{z}}

\newcommand{\eps}{\varepsilon}

\begin{document}

\title{Game dynamics and Nash equilibria}
\author{Yannick Viossat\thanks{
\emph{E-mail:} viossat $\alpha\tau$ ceremade.dauphine.fr. }
\thanks{The author thanks Sylvain Sorin, for patient and painful hours spent trying to decipher a first version of this article. The support of the ANR RISK and of the Fondation du Risque (Chaire Groupama) is gratefully acknowledged. All errors are mine.}}
\date{CEREMADE, Universit{\'e} Paris-Dauphine}

\maketitle








\begin{abstract}
There are games with a unique Nash equilibrium 
but such that, \emph{for almost all initial conditions}, all strategies in the support of this equilibrium are eliminated by the replicator dynamics and the best-reply dynamics. 
\end{abstract}

MSC classification. Primary: 91A22 ;  Secondary: 34A34, 34A60.

Keywords: Nash equilibrium, replicator dynamics, best-reply dynamics.
 \maketitle
\section{Introduction}
\label{sec:intro} Evolutionary game dynamics model the evolution of the mean behavior in populations of agents interacting strategically. 
A most studied topic is the link between the outcome of these dynamics and Nash equilibria. Many positive connections have been found, 
including convergence to the set of Nash equilibria for many dynamics in special classes of games (Sandholm, 2010).  
In general though, solutions of evolutionary game dynamics need not converge to the set of Nash equilibria (Hofbauer and Sigmund, 1998, section 8.6). By contrast with no-regret dynamics (e.g., Hart, 2005 and references therein), replacing Nash equilibria by correlated equilibria and convergence of the solutions by convergence of some time-average hardly helps: for many dynamics, there are examples of games with a unique Nash equilibrium, which is also the unique correlated equilibrium, but such that, for some initial conditions, all strategies in the support of this equilibrium are eliminated (Viossat, 2007, 2008). 

In these examples however, the Nash equilibrium is strict and thus asymptotically stable under reasonable dynamics. This leads to the following question: are there games such that all strategies in the support of Nash equilibria are eliminated \emph{for almost all initial conditions}? This article shows that  the answer is positive, at least for 
the two most studied 
dynamics: the replicator dynamics (REP) and the best-reply dynamics (BR). For BR, we exhibit \emph{an open set} of such games. 


Our examples are relatively high dimensional: $6 \times 6$  games for BR
and $7 \times 7$ for REP. The reason why we need an extra-dimension for the replicator
dynamics 
seems purely technical: our examples for the best-reply dynamics should work as
well for the replicator dynamics, but this is not so easy
to prove, as the replicator dynamics is more difficult to
analyze than the best-reply dynamics.

The reason why our games are relatively high dimensional is
deeper: 
first, by the folk-theorem of evolutionary
game theory (Weibull, 1995, Prop. 4.11), if an interior
trajectory of REP or BR converges to a point, then this point is a
Nash equilibrium. Thus, we need nonconvergent trajectories, and
along which, asymptotically, only strategies that do not belong to
the support of a Nash equilibrium have positive probability. For single population dynamics, this
seems to require at least three strategies not in the support of Nash equilibria. Moreover, the only solution for having a unique strategy in the support of at least one Nash equilibrium is to have a unique, pure Nash equilibrium. But such a
Nash equilibrium would be strict, hence asymptotically stable:
\begin{prop}
\label{prop:strictNash} In a bimatrix game, a unique and pure Nash
equilibrium is strict.
\end{prop}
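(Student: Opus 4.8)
The natural route is to prove the contrapositive: assuming $(i^*,j^*)$ is a pure equilibrium that fails to be strict, I would exhibit a \emph{second} Nash equilibrium, contradicting uniqueness. Write $A$ and $B$ for the payoff matrices of players $1$ and $2$. Recall that $(i^*,j^*)$ is strict precisely when $A_{i^* j^*} > A_{i j^*}$ for every row $i \neq i^*$ and $B_{i^* j^*} > B_{i^* j}$ for every column $j \neq j^*$. Hence, if it is not strict, one of the two best-reply conditions holds with equality, and since the two players play symmetric roles (transpose the game) I may assume it is player $1$: there is a row $i_0 \neq i^*$ with $A_{i_0 j^*} = A_{i^* j^*}$, so that $i_0$ is an alternative pure best reply to $j^*$.

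The core construction is to perturb player $1$ towards $i_0$ while keeping player $2$ at $j^*$. Put $p_t = (1-t)\delta_{i^*} + t\,\delta_{i_0}$ for $t \in [0,1]$. Because $i^*$ and $i_0$ are both best replies to $j^*$, so is every $p_t$; therefore $(p_t,\delta_{j^*})$ is a Nash equilibrium as soon as $j^*$ remains a best reply of player $2$ against $p_t$. The latter condition is the family of inequalities $(1-t)B_{i^* j} + t\,B_{i_0 j} \le (1-t)B_{i^* j^*} + t\,B_{i_0 j^*}$, one per column $j$, each affine in $t$ and satisfied at $t=0$. For a column $j$ with $B_{i^* j} < B_{i^* j^*}$ the inequality is strict at $t=0$, hence survives for small $t$; as there are finitely many columns, the only possible obstruction comes from a column $j$ that ties with $j^*$ against $i^*$ (i.e. $B_{i^* j} = B_{i^* j^*}$) and for which player $2$ strictly gains by switching when facing $i_0$ (i.e. $B_{i_0 j} > B_{i_0 j^*}$). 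If no such column exists, then $(p_t,\delta_{j^*})$ is a genuine equilibrium for all small $t>0$, giving the desired contradiction.

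The main obstacle is exactly the doubly degenerate situation in which such a column $j_0$ does exist: then both players have ties, arranged so that neither one-sided perturbation stays an equilibrium. To finish I would run the symmetric argument, perturbing player $2$ towards $j_0$ while keeping player $1$ at $i^*$, which either produces a second equilibrium or again unearths a row $r_0$ with $A_{r_0 j^*} = A_{i^* j^*}$ and $A_{r_0 j_0} > A_{i^* j_0}$. Iterating traps the analysis inside the finite best-reply block spanned by the tied rows and columns, where a dimension/finiteness argument (the perturbation cannot keep returning to the same vertex indefinitely) forces a new equilibrium to appear. A cleaner, if less self-contained, way to close this case is to invoke the theorem that every bimatrix game possesses a quasi-strict equilibrium (Norde, 1999): since our game has a single equilibrium, that equilibrium must be quasi-strict, and a quasi-strict equilibrium that is pure is strict by definition. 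I expect this two-sided degenerate case to be the only real difficulty, the generic one-sided case being immediate from the affine perturbation above.
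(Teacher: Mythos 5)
Your closing fallback is, word for word, the paper's own proof: the paper quotes the result that in a bimatrix game a unique Nash equilibrium is quasi-strict (Jansen, 1981; Norde, 1999), and then observes that a pure quasi-strict equilibrium is strict. So the last few lines of your proposal are correct and do settle the statement --- but they are not a ``cleaner way to close one case'', they are the entire proof, and everything before them is doing no work.

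The self-contained perturbation route, which you present as the main argument, has a genuine gap in the doubly degenerate case, and it is worth seeing concretely why it cannot be patched cheaply. The one-sided generic case is fine. But the ``iterating \dots\ dimension/finiteness argument'' is a hope, not an argument: the iteration can terminate immediately without producing anything. Take
\[
A=\begin{pmatrix} 1 & 0\\ 1 & 2 \end{pmatrix}, \qquad B=\begin{pmatrix} 1 & 1\\ 0 & 2\end{pmatrix},
\]
with $(i^*,j^*)=(1,1)$, so $i_0=2$, $j_0=2$, and your $r_0$ equals $i_0$. Both one-sided perturbations fail for \emph{every} $t>0$ (as soon as player 1 moves toward row 2, column 2 is strictly better for player 2, and symmetrically), and the iteration closes on the block $\{1,2\}\times\{1,2\}$ after one step with no new equilibrium exhibited; the second equilibrium $(2,2)$ is reached only by moving \emph{both} players at once, which your scheme never does. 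Worse, the implicit reduction to the block is unsound in general: append a third row with $A_{31}=0$, $A_{32}=3$, $B_{31}=B_{32}=-1$. Then $(1,1)$ is still a non-strict equilibrium, the iteration still yields the block $\{1,2\}\times\{1,2\}$, but the block's internal equilibrium $(2,2)$ is \emph{not} an equilibrium of the full game (row 3 is a strictly better reply to column 2), and the true second equilibrium, $(3,2)$, uses a row that never enters your iteration. So the analysis is not ``trapped inside the block'': in the degenerate case the second equilibrium must be found by a global argument, and that is precisely the content --- and the difficulty --- of the Jansen/Norde theorems you end up invoking anyway.
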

\begin{proof}
A Nash equilibrium is \emph{quasi-strict} if each player puts positive weight on each of her pure best-replies. In a bimatrix game, if a Nash equilibrium is unique, then it is quasi-strict (Jansen, 1981; Norde, 1999); if it is
unique and pure, it is quasi-strict and pure, hence strict.
\end{proof}

We thus need at least two strategies in the support of Nash equilibria. With the three strategies not in the support of equilibria, 
this makes at least five strategies. Our examples for the best-reply dynamics are $6
\times 6$ games: there might be room for improvement, but
not much.


The remainder of this article is organized as follows: the framework and the notation are introduced below. Section \ref{sec:BR} studies the behavior of the best-reply dynamics in a family of $6 \times 6$ games. Section \ref{sec:REP} studies the replicator dynamics in a specific $7 \times 7$ game. Section \ref{sec:discussion} concludes. Appendix \ref{app:A} shows that the games we study have a unique Nash equilibrium. Appendix \ref{app:BR} studies the behavior of the best-reply dynamics in the $7 \times 7$ game of Section \ref{sec:REP}.\\ 

\noindent \emph{Notation and definitions.} We study
single-population dynamics in two-player, finite symmetric games.
The set of pure strategies is $I=\{1,2,..,N\}$ and the payoff
matrix is $\bU=(u_{ij})_{1 \leq i,j \leq N}$. Thus, $u_{ij}$ is the payoff of an individual playing strategy $i$ 
against an individual playing strategy $j$. Let $S_N$ denote
the simplex of mixed strategies (henceforth, ``the simplex"):
$$S_N=\left\{\bx \in \R_+^{I} : \sum_{i \in I} x_i =1\right\}.$$  
Its vertices $\be_i$, $1 \leq i \leq N$, correspond to the pure strategies of the game. 
Note that vectors and matrices are denoted by bold characters.

Denote by $x_i(t)$ the proportion of the population playing strategy $i$ at time $t$ and by $\bx(t)=(x_1(t),...,x_N(t)) \in S_N$ the population profile (or mean strategy).  
We often omit time arguments and write 
$\bx$ 
for $\bx(t)$. 
We study the evolution of the population profile under the two most studied dynamics: the replicator dynamics and the best-reply dynamics.  

The \emph{replicator dynamics} (Taylor and Jonker, 1978) may be derived by assuming that the per capita growth rate of the total number of individuals playing strategy $i$ is the payoff of the game.\footnote{Or, up to a change of time, a background fitness plus the payoff of the game.} For frequencies of strategies, this leads to: 
\begin{equation}
\tag{REP}
\label{eq:defREP} \dot{x}_{i}=x_{i}\left[(\bU\bx)_{i}-
\bx \cdot \bU\bx \right]
\end{equation}

The right-hand side is Lipschitz in $\bx$, hence there is a unique solution through each initial condition. This solution is \emph{interior} if $x_i(t)>0$ for all $i \in I$ and all $t \in \R$. 
Since the faces of the simplex are invariant under (\ref{eq:defREP}), this boils down to the initial condition being interior; that  is, $x_i(0)>0$ for all $i$ in $I$. 

The \emph{best-reply dynamics} (Gilboa and Matsui, 1991; Matsui, 1992) may be derived by assuming that in each small time interval, a fraction of the population revises its strategy and switches (rationally, but myopically) to a best-reply to the current population profile.  Since this best-reply need not be unique, this does not lead to a differential equation but to the differential \emph{inclusion}:
\begin{equation}%
\tag{BR}
\label{eq:defBR}%
\dot{\bx} \in BR(\bx)-\bx
\end{equation}
where $BR(\bx)=\{\mbox{$\textbf{y}$} \in S_N: \by \cdot \bU\bx
=\max_{\bz \in S_N} \bz \cdot \bU \bx\}$ denotes the set of mixed best-replies to $\bx$. 
A solution of the best-reply dynamics is an absolutely continuous function $\bx: \mathbb{R}_+ \to S_N$ satisfying (\ref{eq:defBR}) 
for almost every $t$. Solutions exist for each initial condition, but need not be unique. 
%

\emph{Other definitions.} The \emph{limit set} of a solution $\bx(\cdot)$ of a given dynamics is the set of accumulation points of $\bx(t)$ as $t \to +\infty$. 
A pure strategy $i$ \emph{belongs to the support of a Nash equilibrium} of a symmetric bimatrix game if  there is a Nash equilibrium $(\bx, \by)$ 
such that $x_i>0$ (or equivalently, due to the symmetry of the game, a Nash equilibrium $(\bx, \by)$  such that $y_i>0$). 
Finally, the pure strategy $i$ is \emph{eliminated}  (for a given solution $\bx(\cdot)$ of a given dynamics) if $x_{i}(t) \to 0$ as $t \to + \infty$. 

We show that there are games with a unique Nash equilibrium but such that, under the best-reply 
dynamics and the replicator dynamics, all strategies in the support of this equilibrium are eliminated for almost all initial conditions.

\section{Best-reply dynamics}
\label{sec:BR}
\subsection{A reminder on Rock-Paper-Scissors}


A general Rock-Paper-Scissors game (RPS) is a $3 \times 3$ symmetric game with payoff matrix %
\begin{equation}
\label{eq:RPS-gen} %
\left(\begin{array}{ccc}
 a_1   & b_2 & c_3   \\
 c_1   & a_2 & b_3   \\
 b_1   & c_2 &  a_3  \\
\end{array}\right) \quad \mbox{ with } 
b_i < a_i < c_i, \, 
i=1,2,3.
\end{equation}
%
(As the game is symmetric, we only indicate the payoffs of the row player.) 
These games have a unique Nash equilibrium. It is symmetric and completely mixed. We say that the game is \emph{outward cycling} if 
\begin{equation}
\label{eq:outward-RPS} 
 \prod_{i=1}^3 (a_i - b_i) > \prod_{i=1}^3 (c_i - a_i)
\end{equation}
In that case, almost all solutions of the best-reply dynamics converge to a triangle, which Gaunersdorfer and Hofbauer (1995) called the \emph{Shapley triangle} after Shapley (1964).  It is defined by
\begin{equation}
\label{eq:ST}
ST=\left\{\bx \in S_3: V(x)=0\right\} 
\mbox{ with } V(\bx)=\max_{1 \leq i \leq 3}(\bU\bx)_i -\sum_{1 \leq i \leq 3} a_i x_i 
\end{equation}
\begin{prop}[Gaunersdorfer and Hofbauer, 1995] 
\label{prop:BR-RPS}
In an outward cycling RPS game, for every initial condition different from the equilibrium, the solution of the best-reply dynamics is uniquely defined and its limit set is the Shapley triangle  (\ref{eq:ST}). \footnote{If $\prod_{i=1}^3 (a_i - b_i) = \prod_{i=1}^3 (c_i - a_i)$, e.g., if the game is zero-sum, the Shapley triangle is degenerate and coincides with the equilibrium; if $\prod_{i=1}^3 (a_i - b_i) < \prod_{i=1}^3 (c_i - a_i)$, the Shapley triangle is empty. In both cases, all solutions of the best-reply dynamics converge to the equilibrium.}\end{prop}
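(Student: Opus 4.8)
The plan is to exploit the piecewise-linear structure of (\ref{eq:defBR}) together with a single scalar identity for $V$. First I would partition $S_3 \setminus \{\bp\}$ (where $\bp$ is the unique, completely mixed, symmetric equilibrium) into the three best-reply regions $R_k = \{\bx \in S_3 : (\bU\bx)_k \geq (\bU\bx)_j \ \forall\, j\}$, whose pairwise intersections are the three segments joining $\bp$ to the edges of $S_3$. In the interior of $R_k$ the best reply is the single vertex $\be_k$, so (\ref{eq:defBR}) reduces to the linear equation $\dot{\bx} = \be_k - \bx$, whose solutions are straight segments travelled toward $\be_k$. A solution is therefore a concatenation of such segments, turning each time it crosses a boundary between two regions.

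The key step is to differentiate $V$ along a solution while it lies in the interior of $R_k$. Writing $w(\bx) = \max_i (\bU\bx)_i = (\bU\bx)_k$ there, one computes $\dot{w} = (\bU(\be_k - \bx))_k = u_{kk} - w = a_k - w$ and $\frac{d}{dt}\sum_i a_i x_i = a_k - \sum_i a_i x_i$, whence
$$\dot{V} \;=\; (a_k - w) - \Big(a_k - \sum_i a_i x_i\Big) \;=\; -V .$$
Since $V$ is continuous and this identity holds in every region interior, $V(\bx(t)) = V(\bx(0))\,e^{-t}$ for as long as the solution avoids $\bp$; in particular $V(\bx(t)) \to 0$, so every accumulation point lies in the Shapley triangle $ST = \{V=0\}$ of (\ref{eq:ST}). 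Next I would locate $\bp$ relative to $ST$. Evaluating $V$ at the vertices gives $V(\be_k) = c_k - a_k > 0$, while $V$ is convex (a maximum of affine functions minus an affine function) and a short computation identifies its minimizer with $\bp$ and shows that outward cycling (\ref{eq:outward-RPS}) is equivalent to $V(\bp) < 0$; thus $ST$ is a genuine triangle encircling $\bp$. Because $V(\bx(t)) = V(\bx(0))\,e^{-t}$ keeps the sign of $V(\bx(0))$ and stays above $V(\bx(0)) > V(\bp)$ for every $\bx(0) \neq \bp$, no such solution ever reaches $\bp$; each is therefore defined for all $t \geq 0$ and converges to $ST$.

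It remains to prove uniqueness and that the limit set is all of $ST$. For uniqueness, the only multivalued points of $BR$ are on the boundary segments; there I would check that every admissible selection $\lambda\be_j + (1-\lambda)\be_k - \bx$ points strictly across the boundary in the direction of the cyclic order, rather than along it (the only exception being $\bp$), so that the continuation into the next region is forced. For the limit set, $V(\be_k) > 0$ prevents any solution from converging to a vertex, so it must keep crossing the three boundaries cyclically, winding around $\bp$ infinitely often; since $ST$ is itself a single periodic orbit of (\ref{eq:defBR}) (on $ST$ one has $\dot{V} = -V = 0$) and the solution winds around it with $V \to 0$, the $\omega$-limit set is exactly $ST$. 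The main obstacle is precisely this last part together with the uniqueness check: the identity $\dot{V} = -V$ delivers convergence to $ST$ almost for free, but controlling the differential inclusion on the boundary segments, and proving that the limit set fills the \emph{whole} triangle rather than a proper subset, require the transversality and winding analysis sketched above.
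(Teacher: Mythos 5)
Your proposal is correct and takes essentially the same route as the paper's own proof sketch (and the Gaunersdorfer--Hofbauer argument it defers to): the same Lyapunov function $V$ with $\dot V = -V$ while the solution points at a pure best reply, the same transversality/strict-domination argument at region boundaries (the paper's improvement principle, Lemma \ref{lm:ip}), and the same winding argument for why the limit set is the whole triangle. Your version simply fills in the computations that the paper states as unproved observations (i) and (ii).
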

A RPS game has \emph{cyclic symmetry} if the payoffs $a_i$, $b_i$, $c_i$ are independent of $i$. The Nash equilibrium is then $(1/3, 1/3, 1/3)$ and, up to a rescaling that does not affect the equilibrium nor the dynamics we study, the payoffs may be taken of the form:
\begin{equation}
\label{eq:RPS-sym} %
\left(\begin{array}{ccc}
 0   & -\alpha & \beta \\
 \beta   & 0 & -\alpha \\
 -\alpha   & \beta & 0 \\
\end{array}\right) \quad \mbox{ with }\alpha>0, \beta>0.
\end{equation}
The outward cycling condition (\ref{eq:outward-RPS}) then boils down to $\alpha> \beta$, and the Shapley triangle (\ref{eq:ST}) to
\begin{equation}
\label{eq:STsym}
ST=\left\{\bx \in S_3 : \max_{1 \leq i \leq 3}(\bU\bx)_i=0\right\}. 
\end{equation}
We now describe in detail the behavior of the best-reply dynamics in  RPS games, and give a sketch of proof of Proposition \ref{prop:BR-RPS}, as this allows to introduce some crucial tools. The first one is a version of the improvement principle (Monderer and Sela, 1997). It says that when the solution of the best-reply dynamics points towards a pure best-reply $i$, only certain strategies can become best-replies: those that are better replies to $i$ than $i$ itself. 
\begin{lem}[Improvement principle] 
\label{lm:ip}
Let $\bx(\cdot)$ be a solution of the best-reply dynamics. 
Assume that on the interval $]T, T'[$, with $T<T'$, the unique best-reply to $\bx(t)$ is strategy $i$. 
If strategy $j \neq i$ is a best-reply to $\bx(T')$ then $u_{ji} > u_{ii}$.
\end{lem}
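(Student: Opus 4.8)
The plan is to exploit the fact that, on the interval where $i$ is the unique best-reply, the differential inclusion (\ref{eq:defBR}) collapses to an ordinary differential equation. Indeed, for $t \in \,]T, T'[$ we have $BR(\bx(t)) = \{\be_i\}$, so (\ref{eq:defBR}) reads $\dot{\bx}(t) = \be_i - \bx(t)$ for almost every such $t$; since the right-hand side is continuous in $\bx$, the absolutely continuous solution $\bx(\cdot)$ is in fact $C^1$ on $]T, T'[$ and solves this linear equation in the classical sense, with $\bx(t)$ moving along the segment toward $\be_i$ and extending continuously to $t = T'$.

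Next I would track the payoff gap between $j$ and $i$ by introducing
$$g(t) := (\bU\bx(t))_j - (\bU\bx(t))_i.$$
Differentiating and substituting $\dot{\bx} = \be_i - \bx$ gives $\frac{d}{dt}(\bU\bx(t))_k = u_{ki} - (\bU\bx(t))_k$ for each $k$, hence
$$\dot{g}(t) = (u_{ji} - u_{ii}) - g(t) \qquad \text{on } ]T,T'[.$$
This is a scalar linear ODE, whose solutions are explicit.

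Two facts then pin down $g$. On $]T, T'[$ strategy $i$ is the \emph{unique} best-reply, so $(\bU\bx(t))_i > (\bU\bx(t))_j$ and thus $g(t) < 0$; and at $t = T'$ continuity makes $i$ a best-reply as well, while $j$ is one by assumption, so $(\bU\bx(T'))_j = (\bU\bx(T'))_i$, i.e. $g(T') = 0$. Writing $c := u_{ji} - u_{ii}$ and solving the ODE with the boundary value $g(T')=0$ yields
$$g(t) = c\,(1 - e^{\,T' - t}).$$
For $t < T'$ the factor $1 - e^{\,T'-t}$ is strictly negative, so $g(t) < 0$ forces $c > 0$, that is $u_{ji} > u_{ii}$, as claimed.

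I expect the only delicate point to be the reduction in the first step: one must justify that ``$i$ is the unique best-reply on the open interval'' really turns the inclusion into the equation $\dot{\bx} = \be_i - \bx$, and that the resulting $g$ is differentiable up to the endpoint, so that the boundary condition $g(T')=0$ may legitimately be used. The sign analysis is then immediate, and it is worth noting that it delivers the \emph{strict} inequality for free: had $c$ been zero the ODE would force $g \equiv 0$, and had it been negative $g$ would be positive on the interval, each contradicting $g(t) < 0$.
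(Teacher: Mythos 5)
Your proof is correct, and it is essentially the same argument as the one the paper relies on (the paper's ``proof'' is just a citation of Viossat, 2008, Lemma 4.2): there one integrates $\dot{\bx}=\be_i-\bx$ explicitly, writes $\bx(T')$ as a convex combination of $\bx(t)$ and $\be_i$, and compares the payoff gap between $j$ and $i$ at both points using linearity of payoffs. Your scalar ODE $\dot g=(u_{ji}-u_{ii})-g$ with $g<0$ on $]T,T'[$ and $g(T')=0$ is exactly that computation in different packaging, and your treatment of the two delicate points (the inclusion collapsing to a genuine ODE on the open interval, and using continuity of $\bx$ at $T'$ to get $g(T')=0$) is sound.
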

\begin{proof}
See Viossat (2008, Lemma 4.2). 
\end{proof}
Assume for instance that in a RPS game,  strategy 1 is currently the unique best-reply to the population profile $\bx(t)$, so that the solution points towards $\be_1$; that is, $\dot \bx=\be_1 - \bx$. Since $(\be_1, \be_1)$ is not a Nash equilibrium, a new best-reply must arise. By the improvement principle, this can only be strategy 2. The solution then points towards the edge $\be_1- \be_2$. Since in the game restricted to strategies $1$ and $2$, strategy $2$ strictly dominates strategy $1$, strategy $2$ immediately becomes the unique best-reply. Therefore the solution points towards $\be_2$, then towards $\be_3$, then towards $\be_1$ again,...

By itself, this cyclic behaviour does not preclude convergence to equilibrium. Actually, if  $\prod_{i=1}^3(a_i - b_i) < \prod_{i=1}^3(c_i - a_i)$ then the solutions cycle inwards, and the times at which their direction change accumulate as $\bx(t)$ converges, in finite time, to the equilibrium (Gaunersdorfer and Hofbauer, 1995). 

However, for outward cycling RPS games 
and for solutions that do not start at the equilibrium, this cyclic behavior goes on forever. 
This follows  from the following observations, which we do not prove. Below, the function $V$ is defined in (\ref{eq:ST}) and $v(t)=V(\bx(t))$. 

(i) If the game is outward cycling,  then $V(\bx)$ is zero on the Shapley triangle, positive outside it, and negative inside, with its unique minimum attained at the equilibrium point. 

(ii) When the solution points towards  a pure strategy (that is, $\dot \bx=\be_i - \bx$ for some $i$), then $\dot v(t)=-v(t)$. 


Consider a solution that does not start at the equilibrium. Combining (i), (ii), and the above described cyclic behavior, we get that the solution cannot approach the equilibrium, therefore the times at which the direction changes cannot accumulate and the cyclic behavior goes on for ever; thus, by (ii), $v(t) \to 0$ hence $\bx(t) \to ST$. The limit set of the solution is then easily seen to be the whole triangle.  

\subsection{A $6 \times 6$ game} 

Consider the following $6 \times 6$ symmetric game: 
\begin{equation}
\label{eq:66game} \left(\begin{array}{ccc|ccc}
 0  & -3 & 1  & -1 & -1 & -1 \\
 1  & 0  & -3 & -1 & -1 & -1 \\
 -3 & 1  & 0  & -1 & -1 & -1 \\
\hline
 -4 & -4 & 3  & 0  & -5 & 1  \\
 -1 & -1 & -3 & 1  & 0  & -5 \\
 -1 & -1 & -3 & -5 & 1  & 0  \\
\end{array}\right)
\end{equation}
Let $G_{123}$ and $G_{456}$ denote the $3 \times 3$ games obtained from (\ref{eq:66game}) by restricting the players to their three first 
and to their three last strategies, respectively. Both $G_{123}$ and $G_{456}$ are outward
cycling RPS games with cyclic symmetry. Their unique Nash
equilibrium correspond in the whole game to, respectively: 
%
%
$$\bn_{123}=\left(\frac{1}{3},\frac{1}{3},\frac{1}{3},0,0,0\right)  \quad \mbox{ and } \bn_{456}=\left(0,0,0,\frac{1}{3},\frac{1}{3},\frac{1}{3}\right)$$
The payoffs are chosen so that $(\bn_{123},\bn_{123})$ be a 
Nash equilibrium of (\ref{eq:66game}) but not $(\bn_{456},
\bn_{456})$. 
\begin{prop}
\label{prop:uniqueNash66}
The game (\ref{eq:66game}) has a unique Nash equilibrium:
$(\bn_{123},\bn_{123})$.
\end{prop}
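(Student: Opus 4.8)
The plan is to first check that $(\bn_{123},\bn_{123})$ is indeed a Nash equilibrium, and then reduce the uniqueness claim to a support statement. Computing $\bU\bn_{123}=(-\tfrac23,-\tfrac23,-\tfrac23,-\tfrac53,-\tfrac53,-\tfrac53)$ shows that the best replies to $\bn_{123}$ are exactly strategies $1,2,3$, which carry the whole weight of $\bn_{123}$; so $(\bn_{123},\bn_{123})$ is a symmetric equilibrium. For uniqueness I would prove that in \emph{any} Nash equilibrium $(\bx,\by)$ the supports of both $\bx$ and $\by$ are contained in $\{1,2,3\}$. Once this is established, $(\bx,\by)$ is a Nash equilibrium of the game restricted to $\{1,2,3\}$, which is the RPS game $G_{123}$; since a (general) RPS game has a unique Nash equilibrium, which is symmetric and completely mixed, this forces $\bx=\by=\bn_{123}$. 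The elimination of asymmetric equilibria is thus subsumed by the RPS uniqueness statement recalled in Section~\ref{sec:BR}.

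The engine for the support statement is a single aggregate inequality comparing the first block to the second. The row vector $(\bn_{123}-\bn_{456})^{T}\bU$ has $j$-th entry $\frac13\sum_{i=1}^{3}u_{ij}-\frac13\sum_{i=4}^{6}u_{ij}$; the column sums of rows $1,2,3$ are $(-2,-2,-2,-3,-3,-3)$ and those of rows $4,5,6$ are $(-6,-6,-3,-4,-4,-4)$, so $(\bn_{123}-\bn_{456})^{T}\bU=\left(\frac43,\frac43,\frac13,\frac13,\frac13,\frac13\right)$, all of whose entries are at least $\frac13$. Hence $(\bn_{123}-\bn_{456})\cdot\bU\bz\geq\frac13>0$ for every $\bz\in S_6$: on average the strategies $1,2,3$ strictly outperform $4,5,6$ against any opponent. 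This immediately rules out all three of $4,5,6$ lying in the support of $\bx$ (or of $\by$): if they did, each would be a best reply to $\by$, so $\bn_{456}\cdot\bU\by$ would equal the equilibrium payoff $v_1=\max_i(\bU\by)_i$, whereas $v_1\geq \bn_{123}\cdot\bU\by=\bn_{456}\cdot\bU\by+(\bn_{123}-\bn_{456})\cdot\bU\by>v_1$, a contradiction.

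The main obstacle is that the averaging argument does \emph{not} by itself exclude a proper nonempty subset of $\{4,5,6\}$ from the support: no strategy among $4,5,6$ is dominated, since, for instance, strategy $4$ is the \emph{unique} best reply to $\be_3$ (its column-$3$ payoff $3$ beats every other entry $1,-3,0,-3,-3$). So a lone best reply in the second block cannot be ruled out pointwise and one must invoke both equilibrium conditions together. My plan here is to use that $\by$ is itself a best reply to $\bx$, combined with the cyclic RPS structure of $G_{456}$ and the improvement principle (Lemma~\ref{lm:ip}): being a best reply within $\{4,5,6\}$ constrains the opponent weights $y_4,y_5,y_6$ cyclically, while attractiveness of a second-block strategy requires the opponent to place substantial weight on strategy $3$; chasing these two-sided constraints should force enough of the second block into the best-reply set to re-enter the ``all three'' case already excluded, or else violate the aggregate inequality $(\bn_{123}-\bn_{456})\cdot\bU\bz>0$. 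Carrying out this fixed-point bookkeeping is where the real work lies; the rest is the short reduction to $G_{123}$ described above.
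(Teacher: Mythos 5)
Your first two steps are sound and coincide with the paper's opening move: the computation $\bU\bn_{123}=(-\tfrac23,-\tfrac23,-\tfrac23,-\tfrac53,-\tfrac53,-\tfrac53)$ does show that $(\bn_{123},\bn_{123})$ is an equilibrium, and your aggregate inequality $(\bn_{123}-\bn_{456})\cdot\bU\bz\geq\tfrac13>0$ is exactly the paper's Step 1 (stated there as ``$\bn_{456}$ is strictly dominated by $\bn_{123}$''), which correctly rules out all three of $4,5,6$ being best replies simultaneously. The problem is that everything after that---which you yourself label as ``where the real work lies''---is not a proof but a hope. Excluding a \emph{proper} nonempty subset of $\{4,5,6\}$ from the supports is the actual content of the proposition, and your plan for it leans on the improvement principle (Lemma \ref{lm:ip}), which is a statement about solutions of the best-reply \emph{dynamics} on a time interval; it says nothing about a static Nash equilibrium and cannot be invoked here. ``Chasing two-sided constraints'' with no specified bookkeeping does not close the case where, say, $x_4>0=x_5=x_6$ while $\by$ mixes across both blocks, and it is not clear that any argument along those lines terminates.

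The paper closes precisely this gap with a different tool, Lemma \ref{lm:uniqueNash-lemgen}, which exploits the block structure of the payoffs. Since strategies $1,2,3$ earn identical payoffs ($-1$) against every strategy in $\{4,5,6\}$, once one knows that both players put positive weight on the first block (the paper's Step 2, a short separate argument you do not have), the restrictions $(\bx_{123},\by_{123})$ form an unnormalized equilibrium of the RPS game $G_{123}$, forcing $x_1=x_2=x_3$ and $y_1=y_2=y_3$. Then, because strategies $4,5,6$ all earn $-5/3$ against $\bn_{123}$, the same lemma applied to the block $\{4,5,6\}$ shows that any residual weight of both players on that block would have to be proportional to $\bn_{456}$, i.e.\ put positive weight on all three of its strategies---which your (and the paper's) domination argument already excludes; the remaining one-sided case dies immediately because then $\by=\bn_{123}$, against which strategies $4,5,6$ are not best replies. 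This two-block decomposition is what makes partial supports tractable; without it, or a concrete substitute for it, your outline does not yield a proof.
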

\begin{proof}
See Appendix A. 
\end{proof}
Proposition \ref{prop:uniqueNash66} does not only state that $(\bn_{123},\bn_{123})$ is the unique symmetric Nash equilibrium,
but also that there are no asymmetric Nash equilibria. Nevertheless, from almost all initial
conditions, all strategies in its support are eliminated. More precisely, let $ST_{456}$ denote the Shapley triangle:
\begin{equation} \label{eq:def-ST-456}
ST_{456}=\left\{\bx: x_4 + x_5 +x_6=1 \mbox{ and } \max_{4 \leq i
\leq 6} (\bU\bx)_i=0\right\}
\end{equation}
\begin{prop}
\label{prop:66}
For almost every mixed strategy $\bx$ in $S_6$, there is a unique solution $\bx(\cdot)$ of (\ref{eq:defBR}) such that $\bx(0)=\bx$, and its limit set is the Shapley triangle $ST_{456}$. 
\end{prop}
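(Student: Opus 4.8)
The plan is to analyze the best-reply dynamics in the full $6\times 6$ game by tracking how the solution moves between the two RPS-subgames $G_{123}$ and $G_{456}$. The key structural observation, which I would extract from the payoff matrix (\ref{eq:66game}), is that the block structure is highly asymmetric: the top-right $3\times 3$ block (payoffs of strategies $1,2,3$ against $4,5,6$) consists entirely of $-1$'s, while the bottom-left block (payoffs of $4,5,6$ against $1,2,3$) is more favorable. The strategy of the proof is to show that, regardless of where the solution starts, it is eventually driven into the face $\{x_4+x_5+x_6=1\}$ spanned by strategies $4,5,6$, and that once there it behaves exactly like the outward-cycling RPS dynamics of Proposition~\ref{prop:BR-RPS}, converging to the Shapley triangle $ST_{456}$.

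First I would establish a Lyapunov-type argument showing that the combined weight $x_4+x_5+x_6$ on the second block is non-decreasing, or at least eventually increases to $1$, for almost all initial conditions. The mechanism is the improvement principle (Lemma~\ref{lm:ip}): I would compute, for each pure strategy $i\in\{1,2,3\}$, whether some strategy in $\{4,5,6\}$ is a better reply to $\be_i$ than the strategies in $\{1,2,3\}$, so that once the solution points toward the first block a best-reply in the second block arises. Concretely, I would verify that the cyclic improvement chain $1\to 2\to 3\to\cdots$ inside $G_{123}$ is interrupted because strategies $4,5,6$ offer strictly higher payoffs against certain $\be_i$, pulling the direction field toward the second face. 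Combined with the fact that $(\bn_{456},\bn_{456})$ is \emph{not} a Nash equilibrium, this should force the weight on $\{4,5,6\}$ up to $1$ and keep it there, since no best-reply ever points back into $\{1,2,3\}$ once the solution is near that face.

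Once the solution is confined to the face $x_4+x_5+x_6=1$, which is invariant under (\ref{eq:defBR}) because it spans a subgame, the dynamics reduce to those of the outward-cycling RPS game $G_{456}$ with cyclic symmetry. Proposition~\ref{prop:BR-RPS} then applies directly and gives convergence of the limit set to $ST_{456}$, together with uniqueness of the solution away from the interior equilibrium of $G_{456}$. The almost-everywhere and uniqueness claims would be handled by excluding the measure-zero set of initial conditions from which the solution could hit a non-generic point (e.g.\ the interior equilibrium $\bn_{123}$, or points where the best-reply is non-unique along a positive-measure set of times), using the version of the improvement principle to guarantee uniqueness of the trajectory once the best-reply becomes a single pure strategy.

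The main obstacle, I expect, is the transition phase: proving rigorously that the weight $x_4+x_5+x_6$ genuinely increases to $1$ rather than oscillating or settling at an intermediate mixture involving both blocks. This requires ruling out solutions that perpetually cycle while keeping positive weight on $\{1,2,3\}$, and showing that the improvement-principle chains never create a stable best-reply pattern mixing the two blocks. The asymmetry of the off-diagonal blocks (the uniform $-1$ payoffs against $\{4,5,6\}$ versus the more generous payoffs for $\{4,5,6\}$) is presumably exactly what the payoffs were ``chosen'' to guarantee, so the heart of the argument is a careful case analysis of which pure strategy is the best-reply in each region of $S_6$, tracking the direction $\dot\bx = \be_i - \bx$ and verifying monotonicity of $x_4+x_5+x_6$ along each linear segment of the trajectory.
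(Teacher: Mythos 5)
Your proposal contains two genuine gaps, both at the heart of the argument.

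First, the escape from the first block is not established. The improvement principle (Lemma \ref{lm:ip}) only restricts which strategies \emph{can} become best replies; it never forces one of them to arise. From strategy $3$, both strategy $1$ and strategy $4$ are admissible successors ($u_{13}>u_{33}$ and $u_{43}>u_{33}$), so a priori the solution could cycle $1\to 2\to 3\to 1\to\cdots$ forever without any strategy of $\{4,5,6\}$ ever becoming a best reply; checking better replies to the vertices $\be_i$, as you propose, cannot rule this out. The paper's key step is global: if the cycling in $\{1,2,3\}$ went on forever, the limit set of the solution would be the \emph{whole} Shapley triangle $ST_{123}$, and at its vertex closest to $\be_3$, namely $\frac{1}{13}(1,3,9,0,0,0)$, strategy $4$ is the unique best reply (its payoff there is $11/13$, versus $0$ for strategies $1$ and $3$) --- a contradiction. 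This is exactly what the payoffs were ``chosen'' to achieve, and your proposal has no substitute for it. Relatedly, your monotonicity claim is false: while the solution points toward $\be_1$, $\be_2$ or $\be_3$, the weight $x_4+x_5+x_6$ decays exponentially, so it is certainly not non-decreasing; it only tends to $1$ after the first time a strategy of $\{4,5,6\}$ becomes a best reply.

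Second, the reduction to the face $\{x_4+x_5+x_6=1\}$ fails. That face is \emph{not} invariant under (\ref{eq:defBR}): at its barycenter $\bn_{456}$, strategies $4,5,6$ earn $-4/3$ while strategies $1,2,3$ earn $-1$, so the best replies there are $1,2,3$ and the vector field points off the face (this is precisely why $(\bn_{456},\bn_{456})$ is not a Nash equilibrium --- a fact that works \emph{against} your reduction, not for it). Spanning a subgame gives invariance for the replicator dynamics, not for the best-reply dynamics. Moreover, the solution never reaches this face in finite time, so Proposition \ref{prop:BR-RPS} cannot be ``applied directly'' in any case. The paper instead argues in the full simplex: once some strategy in $\{4,5,6\}$ is the unique best reply, the improvement principle shows the cycle $4\to 5\to 6\to 4$ is closed (no strategy $j\leq 3$ can re-enter, since $u_{ji}<u_{ii}$ for all $i\geq 4$), the switching times do not accumulate, hence $x_i(t)=x_i(0)e^{-t}$ for $i\leq 3$, and the function $v(t)=\max_{4\leq i\leq 6}(\bU\bx(t))_i$ satisfies $\dot v=-v$, which gives $\bx(t)\to ST_{456}$ directly, without any restriction to a subgame. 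Your handling of uniqueness by discarding a measure-zero set of initial conditions is in the right spirit (the paper discards the set where the transition happens through the tie $\{1,3,4\}$, traced backward in time), but the two gaps above are where the real work lies.
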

%
\begin{proof} The proof relies on the improvement principle and the better-reply structure of the game, described in Fig. 1 below:
\unitlength 0,71mm
\begin{center}
\begin{picture}(175,80)
\put(-5,10){\line(3,5){36}}
\put(-5,10){\line(1,0){72}}
\put(31,70){\line(3,-5){36}}
\put(67,10){\line(1,0){42}}
\put(109,10){\line(3,5){36}}
\put(109,10){\line(1,0){72}}
\put(145, 70){\line(3,-5){36}}
\put(34, 10){\line(-1,1){5}}
\put(34, 10){\line(-1,-1){5}}
\put(91, 10){\line(-1,1){5}}
\put(91, 10){\line(-1,-1){5}}
\put(10, 35){\line(-1,6){1.2}}
\put(10, 35){\line(6,1){6}}
\put(49, 40){\line(-1,-3){2}}
\put(49, 40){\line(3,-1){5.2}}
%
\put(148, 10){\line(-1,1){5}}
\put(148, 10){\line(-1,-1){5}}
\put(124, 35){\line(-1,6){1.2}}
\put(124, 35){\line(6,1){6}}
\put(163, 40){\line(-1,-3){2}}
\put(163, 40){\line(3,-1){5.2}}
%
\put(30,74){$1$}
\put(144,74){$6$}
\put(-9,4){$2$}
\put(67,4){$3$}
\put(105,4){$4$}
\put(181,4){$5$}

%
%
 \end{picture}
\end{center}
\vspace{-0.5cm}
\begin{small} Figure 1: Better-replies to pure strategies in game (\ref{eq:66game}). An arrow from $i$ to $j$ means that $u_{ij}>u_{ii}$.\end{small}\\

Consider a solution $\bx(\cdot)$ of the best-reply dynamics. We may assume that there is a unique best-reply to $\bx=\bx(0)$, since this holds for almost all $\bx$ in  $S_6$. There are then two cases.
 
\emph{Case 1: the unique best-reply to $\bx(0)$ is strategy 4, 5 or 6. } Assume for concreteness that  this is strategy $4$. The improvement principle (Lemma \ref{lm:ip}) and the same reasoning as for RPS games imply that the solution first points towards $e_4$, then towards $e_5$, then towards $e_6$, then towards $e_4$ again, in a cyclic fashion. It may be shown exactly as in (Viossat, 2008, p.33) that the times at which the direction of the solution changes do not accumulate.\footnote{The idea is to show that the function $W(\bx)=\max_{4\leq i \neq  j \leq 6} [(\bU\bx)_i - (\bU\bx)_j]$ is bounded away from zero.} 
 It follows that this cyclic behavior goes on for ever. Therefore, strategies $1$, $2$ and $3$ never become best-replies, hence $x_i(t)=x_i(0)e^{-t} \to 0$ for all $i$ in $\{1,2,3\}$. Moreover, when strategy $i \in \{4, 5, 6\}$ is the unique best-reply, the function $v(t)= \max_{4 \leq i \leq 6} (\bU\bx)_i(t)$ is equal to $v(t)=(\bU\bx)_i(t)=\be_i \cdot \bU\bx(t)$ and satisfies: 
\begin{equation}
\dot v = \be_i \cdot \bU\dot \bx= \be_i \cdot \bU (\be_i - \bx) = - \be_i \cdot \bU\bx = - v
\end{equation} 
Therefore $v(t) \to 0$, hence $x(t) \to ST_{456}$ as $t \to +\infty$.  

\emph{Case 2: the unique best-reply to $\bx(0)$ is strategy 1, 2 or 3. } 
Assume for concreteness that  this is strategy $4$. If none of the strategies 4, 5 and  6 ever becomes a best-reply, the solution points towards $e_1$, then towards $e_2$, then towards $e_3$, etc., and due to the same reasoning as in case 1, its limit set will be the Shapley triangle 
$$ST_{123}=\left\{\bx: x_1 + x_2 +x_3=1 \mbox{ and } \max_{1 \leq i \leq 3} (\bU\bx)_i=0\right\}.$$   
This is impossible, because the payoffs are such that at one of the vertices of this triangle, the closest to $\be_3$, strategy 4 is the unique best-reply. This vertex is given by $\frac{1}{13} (1, 3, 9, 0, 0, 0)$, see Gaunersdofer and Hofbauer (1995, Eq. (3.6)).
 
Thus, there exists a first time $T>0$ at which one of the strategies $4$, $5$ and $6$ becomes a best-reply. Due to the improvement principle and to the better-reply structure of the game (Fig. 1), this can only be strategy $4$, and just before $T$, the unique best-reply was strategy $3$.

\noindent There are then two subcases:

\emph{Subcase 2.1}: the pure best-replies at time $T$ are strategies $1$, $3$ and $4$ (that is, strategies $1$ and $4$ become best-replies at the same time). The dynamics then admits several solutions and becomes more difficult to analyze. Fortunately, the solution can be precisely traced back in time (in backward time, starting from T, it moves away from $e_3$ along a straight line, then away from $e_2$,...). It follows that the set of initial conditions for which this case occurs is contained in the intersection of the simplex with a countable union of hyperplanes of $\R^N$, none of which contains the simplex. Therefore, this set has Lebesgue measure zero (with respect to the simplex) and we can neglect this case. 

\emph{Subcase 2.2}: the pure best-replies at time $T$ are strategies $3$ and $4$. The solution will then point towards the edge $e_3-e_4$. Since in the game reduced to strategies $3$ and $4$, strategy $4$ strictly dominates strategy $3$, it follows that strategy $4$ becomes the unique best-reply and we are back to case 1. 
\end{proof}

\emph{Robustness to perturbations of the payoffs. }  The above proof uses only strict inequalities, which are unaffected by sufficiently small perturbations of the payoffs (the only modification is that the Shapley triangles and the underlying functions V must be defined as in (\ref{eq:ST}) because the diagonal terms need no longer be zero). Moreover,  since the game is a bimatrix game with a unique Nash equilibrium, it follows that any game in its neighborhood has a unique Nash equilibrium, and with the same support (Jansen, 1981). Therefore: 
\begin{prop}
\label{prop:BR-rob}
There exists a neighborhood of game (\ref{eq:66game}) such that, for any symmetric game in this neighborhood, the unique Nash equilibrium has support in $\{1,2,3\} \times \{1,2,3\}$, but for almost all initial conditions, strategies $1$, $2$ and $3$ are eliminated by the best-reply dynamics. 
\end{prop}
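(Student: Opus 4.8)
The plan is to observe that the proof of Proposition~\ref{prop:66} uses only \emph{strict} inequalities among the payoff entries, each of which defines an open condition, so that the entire dynamical argument persists on an open neighborhood of game (\ref{eq:66game}); the statement about the equilibrium is then imported from a known robustness result. I would first settle the equilibrium. Game (\ref{eq:66game}) is a bimatrix game with a unique Nash equilibrium, so that equilibrium is quasi-strict (Jansen, 1981; Norde, 1999): in particular, strategies $4,5,6$ are \emph{strictly} inferior replies to $\bn_{123}$. This is an open condition, and by the persistence of a quasi-strict equilibrium under small perturbations, together with uniqueness (Appendix~A), every symmetric game in a suitable neighborhood $\mathcal{N}_1$ of (\ref{eq:66game}) has a unique Nash equilibrium with support contained in $\{1,2,3\}$ (Jansen, 1981).

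Next I would list the finite collection of strict inequalities on which the proof of Proposition~\ref{prop:66} depends: the RPS inequalities $b_i<a_i<c_i$ and the outward-cycling inequality (\ref{eq:outward-RPS}) for both $G_{123}$ and $G_{456}$; every arrow $u_{ij}>u_{ii}$ of Figure~1 and, where no arrow appears, the reverse inequality $u_{ij}<u_{ii}$, which via the improvement principle (Lemma~\ref{lm:ip}) constrain the admissible best-reply transitions; the strict dominance of $3$ by $4$ in the game reduced to $\{3,4\}$; the fact that strategy $4$ is the strict unique best reply at the vertex $\tfrac{1}{13}(1,3,9,0,0,0)$ of $ST_{123}$, forcing every solution off the face $\{1,2,3\}$; and the non-accumulation of switching times, which rests on $W(\bx)=\max_{4\le i\neq j\le 6}[(\bU\bx)_i-(\bU\bx)_j]$ being bounded away from zero. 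Each is strict, hence stable under a small perturbation, so I would shrink the neighborhood to $\mathcal{N}_2\subseteq\mathcal{N}_1$ on which all of them hold at once. The single bookkeeping change is that, since the diagonal entries need no longer vanish, the Shapley triangle $ST_{456}$ and the function $V$ must be taken in the general form (\ref{eq:ST}); with that definition observation (ii) becomes the identity $\dot V=-V$ whenever the solution points toward a pure strategy (a computation in which the term $u_{ii}$ cancels, so it does not use $u_{ii}=0$), and the qualitative picture in observation (i) follows from the strict inequality (\ref{eq:outward-RPS}).

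With $\mathcal{N}_2$ fixed, I would re-run the two-case analysis of Proposition~\ref{prop:66} unchanged: in Case~1 the solution cycles perpetually among $\be_4,\be_5,\be_6$, so $x_i(t)=x_i(0)e^{-t}\to 0$ for $i\in\{1,2,3\}$ and, by $\dot V=-V$, $\bx(t)\to ST_{456}$; in Case~2 every solution is driven off the face $\{1,2,3\}$ in finite time and thereby enters Case~1; and the exceptional set of Subcase~2.1 is again contained in a countable union of hyperplanes, none of which contains the simplex, hence has Lebesgue measure zero in $S_6$. This yields elimination of strategies $1,2,3$ from almost every initial condition, finishing the proof on $\mathcal{N}_2$.

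I expect the one genuinely delicate point to be verificatory rather than conceptual: one must confirm that no step of Proposition~\ref{prop:66} secretly relied on a non-strict (boundary) inequality, and that the redefined $V$ and $ST_{456}$ preserve both observations (i) and (ii). The only ingredient imported from outside is the robustness of the unique equilibrium (Jansen, 1981); everything else reduces to the openness of a finite conjunction of strict inequalities.
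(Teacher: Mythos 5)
Your proposal is correct and takes essentially the same approach as the paper: the paper's own proof is exactly the observation that the argument for Proposition~\ref{prop:66} rests only on strict (hence open) payoff inequalities, with the Shapley triangle and $V$ redefined as in (\ref{eq:ST}) to accommodate nonzero diagonal terms, combined with Jansen's (1981) result that a bimatrix game with a unique Nash equilibrium keeps a unique equilibrium with the same support under small perturbations. Your version merely spells out the finite list of strict inequalities and the quasi-strictness detour, which the paper leaves implicit.
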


\section{Replicator Dynamics}
\label{sec:REP}
Up to a further rescaling,  the payoff matrix of an outward cycling RPS game with cyclic symmetry (\ref{eq:RPS-sym}) may be taken of the form:
\begin{equation}
\label{eq:RPS-symbis} %
\left(\begin{array}{ccc}
 0   & -1 & \eps \\
 \eps   & 0 & -1 \\
 -1   & \eps & 0 \\
\end{array}\right)
\mbox{ with } 0<\eps < 1 
\end{equation}
The behavior of the replicator dynamics in such games is well known. The boundary $\Gamma=\{\bx \in S_3 : x_1 x_2 x_3=0\}$ forms a heteroclinic cycle, that is, a globally invariant set consisting of saddle rest-points and saddle orbits connecting these rest-points. Moreover:
\begin{prop}
\label{prop:REP}[Zeeman, 1980
; Gaunersdorfer and Hofbauer, 1995] 
In game $(\ref{eq:RPS-symbis})$, the set $\Gamma$ is asymptotically stable, all interior solutions that do not start at the equilibrium $(1/3, 1/3, 1/3)$ converge to $\Gamma$ and the limit set of their time-average is the Shapley triangle $(\ref{eq:STsym})$. 
\end{prop}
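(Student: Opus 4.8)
The plan is to prove the three assertions in turn, using the Lyapunov function $P(\bx)=x_1 x_2 x_3$ for the first two and a time-average estimate for the third.

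\emph{Convergence to $\Gamma$ and asymptotic stability.} First I would compute, along an interior solution, the logarithmic derivative $\dot P/P=\sum_i[(\bU\bx)_i-\bx\cdot\bU\bx]=\sum_i(\bU\bx)_i-3\,\bx\cdot\bU\bx$. For the matrix $(\ref{eq:RPS-symbis})$ one finds $\sum_i(\bU\bx)_i=\eps-1$ and $\bx\cdot\bU\bx=(\eps-1)(x_1x_2+x_2x_3+x_3x_1)$ on $S_3$, whence
\[
\frac{\dot P}{P}=(\eps-1)\bigl[1-3(x_1x_2+x_2x_3+x_3x_1)\bigr].
\]
Since $\sum_i x_i=1$ gives $x_1x_2+x_2x_3+x_3x_1=\tfrac12(1-\sum_i x_i^2)\le \tfrac13$, with equality only at the barycenter, and since $\eps-1<0$, the bracket is $\ge0$ and $\dot P\le0$, with equality only at $(1/3,1/3,1/3)$. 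Thus $P$ is a strict Lyapunov function. By LaSalle's principle the $\omega$-limit set of any interior orbit lies in the largest invariant subset of $\{\dot P=0\}=\Gamma\cup\{(1/3,1/3,1/3)\}$; as $P$ is strictly decreasing off the barycenter, an orbit not starting there cannot accumulate at the barycenter (where $P$ is maximal), so its $\omega$-limit set is contained in $\Gamma$. Forward invariance of the sublevel sets $\{P\le\delta\}$, which shrink to $\Gamma$ as $\delta\to0$, gives Lyapunov stability; together with the attraction just shown this yields asymptotic stability of $\Gamma$.

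\emph{Time averages.} Write $\bar\bx(T)=\tfrac1T\int_0^T\bx(s)\,ds$ and $c(T)=\tfrac1T\int_0^T\bx\cdot\bU\bx\,ds$. Integrating $\tfrac{d}{dt}\log x_i=(\bU\bx)_i-\bx\cdot\bU\bx$ and dividing by $T$ gives, using linearity of $\bU$, the identity
\[
\frac{\log x_i(T)-\log x_i(0)}{T}=(\bU\bar\bx(T))_i-c(T).
\]
Since $0<x_i\le1$ the left-hand side is $\le -\tfrac1T\log x_i(0)\to0$, so every accumulation point $\bz$ of $\bar\bx(T)$ satisfies $(\bU\bz)_i\le\limsup_T c(T)$ for each $i$. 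The key step is to show $c(T)\to0$: because the orbit converges to the heteroclinic cycle $\Gamma$, the sojourn times near the vertices diverge while the transit times along the edges stay bounded, so the fraction of time spent outside any fixed neighborhood of the vertex set tends to $0$; as $\bx\cdot\bU\bx=(\eps-1)(x_1x_2+x_2x_3+x_3x_1)\to0$ at the vertices and is bounded, $c(T)\to0$. Hence $(\bU\bz)_i\le0$ for all $i$. Moreover, since $\sum_i x_i=1$ forces $\max_i x_i(T)\ge1/3$ at all times, along any sequence $T_n$ with $\bar\bx(T_n)\to\bz$ some fixed index $i^*$ has $x_{i^*}(T_n)\ge1/3$ infinitely often, so the corresponding left-hand side tends to $0$ and $(\bU\bz)_{i^*}=0$. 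Therefore $\max_i(\bU\bz)_i=0$, i.e.\ $\bz\in ST$, which proves that the limit set of the time average is contained in the Shapley triangle $(\ref{eq:STsym})$.

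\emph{Main obstacle.} The remaining and hardest point is the reverse inclusion: that the time average accumulates on \emph{all} of $ST$, not merely inside it. This requires the quantitative description of the passage near the heteroclinic cycle. I would linearize at each vertex, where the eigenvalues are $\eps$ (unstable) and $-1$ (stable), so each Poincar\'e passage contracts the transverse coordinates by a factor governed by the ratio $1/\eps>1$; this both re-proves that $\Gamma$ attracts and shows that successive sojourn times grow geometrically. Because the last sojourn dominates the running average, $\bar\bx(T)$ is driven close to a vertex and then drifts, as $T$ increases, along an arc of $ST$; letting the dominant vertex cycle through $1,2,3$ one recovers the three corners $\propto(1,\eps^2,\eps)$ and their cyclic images, and hence the whole triangle. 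Controlling these estimates uniformly, so as to identify the closure of the set of accumulation points exactly with $ST$, is the technical heart of the argument; this is precisely the time-average analysis of Gaunersdorfer (1992) and Gaunersdorfer and Hofbauer (1995), which I would invoke rather than redo.
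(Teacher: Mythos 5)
Your proposal should first be measured against what the paper actually does: the paper gives no proof of Proposition \ref{prop:REP} at all --- it is stated as a known result, with the proof delegated to Zeeman (1980) and Gaunersdorfer and Hofbauer (1995). Against that benchmark, your reconstruction is correct and follows the standard route of those references. The Lyapunov part is complete: $P(\bx)=x_1x_2x_3$ indeed satisfies $\dot P/P=(\eps-1)\bigl[1-3(x_1x_2+x_2x_3+x_3x_1)\bigr]\le 0$ on $S_3$, vanishing only on $\Gamma\cup\{(1/3,1/3,1/3)\}$, and your LaSalle plus sublevel-set argument correctly delivers both convergence of interior orbits (off the barycenter) and asymptotic stability of $\Gamma$; this is essentially Zeeman's argument. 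For the time-average claim, the integrated identity and the pigeonhole step ($\max_i x_i\ge 1/3$ forces $(\bU\bz)_{i^*}=0$ for some $i^*$) are fine, but note that the step you present as routine --- $c(T)\to 0$ --- is not soft: by your own formula $\dot P/P=(\eps-1)-3\,\bx\cdot\bU\bx$, it is \emph{equivalent} to $\frac{1}{T}\log P(T)\to \eps-1$, i.e.\ to the statement that the time fraction spent away from the vertices tends to zero, and that already requires the saddle linearization and sojourn-time growth you only introduce later under ``Main obstacle''; mere convergence to $\Gamma$ does not give it. So the honest accounting is: you fully prove the first two assertions; for the third, you prove that the accumulation set of the time average lies in $ST$ modulo the sojourn-time analysis, and you defer that analysis and the reverse inclusion (that the whole of $ST$ is attained) to Gaunersdorfer and Hofbauer (1995) --- which is exactly the citation the paper itself uses for the entire proposition. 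Your treatment is therefore consistent with the paper's and, where it does real work, correct and more informative.
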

\noindent (If $\by(\cdot)$ is a solution of (\ref{eq:defREP}), its time-average at $t \neq 0$ is $\frac{1}{t} \int_0^t \by(s) \, ds$.)
  
Two other facts will prove useful: first, in game (\ref{eq:RPS-symbis}), the mean payoff is always nonpositive: 
\begin{lem}
\label{lm:bad}
Let $\bU$ denote the payoff matrix (\ref{eq:RPS-symbis}): $\forall \bx \in S_3, \bx \cdot \bU \bx \leq 0$
\end{lem}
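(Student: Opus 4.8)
The claim is that for the payoff matrix
\[
\bU=\left(\begin{array}{ccc}
0 & -1 & \eps \\
\eps & 0 & -1 \\
-1 & \eps & 0
\end{array}\right),
\qquad 0<\eps<1,
\]
we have $\bx \cdot \bU\bx \le 0$ for every $\bx \in S_3$. The plan is to compute the quadratic form explicitly and then show the resulting expression is nonpositive on the simplex. Writing out $\bx\cdot\bU\bx$ for $\bx=(x_1,x_2,x_3)$, the diagonal terms vanish, so only the off-diagonal contributions survive:
\[
\bx \cdot \bU\bx = -x_1 x_2 + \eps\, x_1 x_3 + \eps\, x_2 x_1 - x_2 x_3 + \eps\, x_3 x_2 - x_3 x_1,
\]
which regroups as
\[
\bx \cdot \bU\bx = (\eps-1)\,(x_1 x_2 + x_2 x_3 + x_3 x_1).
\]

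First I would verify this regrouping carefully: each unordered pair $\{i,j\}$ contributes $u_{ij}+u_{ji}$, and by the cyclic structure each such sum equals $\eps-1$ (for instance $u_{12}+u_{21}=-1+\eps$, and similarly for the other two pairs). This is where the cyclic symmetry of the game does all the work, collapsing the six terms into a single symmetric expression. Once the identity $\bx\cdot\bU\bx=(\eps-1)\sum_{i<j}x_i x_j$ is established, the conclusion is immediate: since $0<\eps<1$ we have $\eps-1<0$, and since $\bx\in S_3$ lies in the nonnegative orthant each product $x_i x_j$ is nonnegative, so the sum $\sum_{i<j}x_i x_j\ge 0$. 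The product of a nonpositive factor with a nonnegative one is nonpositive, giving $\bx\cdot\bU\bx\le 0$.

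I anticipate no genuine obstacle here; the only point requiring a moment of care is the bookkeeping in the regrouping step, making sure every off-diagonal entry is accounted for exactly once with the correct sign. It may also be worth remarking, for context, that equality $\bx\cdot\bU\bx=0$ holds precisely when at most one coordinate of $\bx$ is positive, i.e.\ at the vertices $\be_i$, consistent with the fact that the game is not zero-sum (a zero-sum $\bU$ would force the form to vanish identically, which happens only in the degenerate case $\eps=1$). This observation is not needed for the statement but clarifies why the inequality is strict on the interior and explains the role of the hypothesis $\eps<1$.
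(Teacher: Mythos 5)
Your proof is correct and takes essentially the same approach as the paper's: both compute the quadratic form directly by pairing off-diagonal entries ($u_{ij}+u_{ji}=\eps-1$ for $i\neq j$), the only cosmetic difference being that the paper rewrites $x_1x_2+x_2x_3+x_3x_1$ as $\frac{1}{2}\left[1-\sum_{i=1}^3 x_i^2\right]$ via the simplex constraint, whereas you conclude from nonnegativity of the coordinates alone. Incidentally, your phrasing avoids the paper's small slip of calling the final expression ``nonnegative'': what is nonnegative is the bracketed factor, hence the product is nonpositive, exactly as you state.
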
 
\begin{proof}
A standard computation shows that $\bx \cdot \bU \bx= \frac{(-1+\eps)}{2}\left[1-\sum_{i=1}^3 x_i^2\right]$ which is nonnegative since $\bx \in S_3$.  
\end{proof} 

Second, as computed by Gaunersdorfer and Hofbauer (1995, Eq. (3.6)), the vertex of the Shapley triangle 
closest to $\be_3$ is given by 
\begin{equation}
\label{eq:qbis}
\bar{\bq}=\frac{1}{1+\eps+\eps^2}(\eps^2, \eps, 1)
\end{equation} 

Consider a solution $\by(\cdot)$ of the replicator dynamics that does not start at the equilibrium. Proposition \ref{prop:REP} implies that $\bar{\bq}$ is an accumulation point of the time-average of $\by(t)$. Moreover, for $\eps$ small enough ($\eps < 1/4$ suffices), $4\bar{q}_3 - 3>0$; this implies the following result: 
\begin{equation}
\label{eq:limsup}
\limsup \int_0^t (4y_3(s) - 3) ds = +\infty
\end{equation}

Now consider the following $7 \times 7$ symmetric game:
\begin{equation}
\label{eq:77game}%
 \left(\begin{array}{ccc|c|ccc}
    0     &    -1    & \eps & -10 &  -1/3 + \eps & -1/3 + \eps & -1/3 +\eps  \\
 \eps &    0     &    -1    & -10 &  -1/3 + \eps & -1/3 + \eps & -1/3 +\eps  \\
    -1    & \eps &    0     & -10 &   -1/3 + \eps & -1/3 + \eps & -1/3 +\eps \\
\hline
  -2     &    -2    &     2      &  0  & -1/3       &      -1/3   &   -1/3   \\
\hline
  -1/3   &   -1/3   &   -1/3    & 10  &   0     &    -1    &    \eps     \\
  -1/3   &   -1/3   &   -1/3    & 10  &    \eps     &        0        &       -1         \\
 -1/3   &   -1/3   &   -1/3     & 10  &       -1        &    \eps     &        0         \\
\end{array}\right)
\end{equation}
%
%
%
%
with $\eps>0$ small enough.\footnote{In the proofs, for simplicity, we use $\eps<1/48$, but the results extend easily to $\eps<1/6$, and probably beyond.}  
The games obtained by restricting both players to their three first or to their three last strategies are outward cycling Rock-Paper-Scissors games with cyclic symmetry. The Nash equilibria of these games correspond in the whole game to rest points of the replicator dynamics, which we denote by $\bn_{123}$ and $\bn_{567}$: 
$$\bn_{123}=\left(\frac{1}{3},\frac{1}{3},\frac{1}{3},0,0,0,0\right)  \quad ; \quad \bn_{567}=\left(0,0,0,0,\frac{1}{3},\frac{1}{3},\frac{1}{3}\right)$$
The heteroclinic cycles of the RPS games correspond to heteroclinic cycles of the whole game, which we denote by $\Gamma_{123}$ and $\Gamma_{567}$: 
$$\Gamma_{123}=\{ \bx \in S_7 : x_1 + x_2 + x_3=1 \mbox{ and } x_1 x_2 x_3 =0 \};$$ 
$$\Gamma_{567}=\{\bx \in S_7 : x_5 +x_6 +x_7=1 \mbox{ and } x_5 x_6 x_7=0  \}.$$
\begin{prop}
\label{prop:uniqueNash77}
$(\bn_{123},\bn_{123})$ is the unique Nash equilibrium of game $(\ref{eq:77game})$\footnote{There are no asymmetric
Nash equilibria.}
\end{prop}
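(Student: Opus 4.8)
The plan is to prove that $(\bn_{123},\bn_{123})$ is the unique Nash equilibrium of the $7\times 7$ game (\ref{eq:77game}) by exploiting its block structure. First I would establish that $(\bn_{123},\bn_{123})$ is indeed a Nash equilibrium: against $\bn_{123}$, the payoff to each of strategies $1,2,3$ equals the equilibrium payoff of the embedded RPS game $G_{123}$, which is $0$ by Lemma \ref{lm:bad} (the mixed equilibrium payoff), while strategies $5,6,7$ all yield $-1/3+\eps$, strategy $4$ yields the average of $-2,-2,2$, i.e.\ $-2/3$, all strictly less than $0$ for small $\eps$. So no profitable deviation exists and the support is $\{1,2,3\}$.

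The core of the argument is uniqueness, and the natural tool is iterated strict dominance combined with the structure of RPS blocks. I would look for a dominance relation that removes the ``foreign'' strategies. The off-diagonal blocks are engineered so that strategies $1,2,3$ earn $-1/3+\eps$ against any of $5,6,7$ and a very negative payoff ($-10$) against $4$, while $5,6,7$ earn $-1/3$ against $4$ but $+10$ against $5,6,7$ played by the column, and so on. The key observation to extract is that strategy $4$ (the middle pivot) and the block $\{5,6,7\}$ are made unattractive in equilibrium: I expect that the uniform mixture or a suitable convex combination of $1,2,3$ strictly dominates strategy $4$, and more delicately that one can rule out any equilibrium placing weight on $\{4,5,6,7\}$. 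Concretely, I would show that in any Nash equilibrium $(\bx,\by)$, support on $\{5,6,7\}$ forces, via the payoff comparisons, weight on $4$, which in turn is dominated, producing a contradiction; the nonsymmetric case is handled by the same payoff inequalities applied separately to $\bx$ and $\by$.

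The cleanest route, paralleling the $6\times 6$ case, is to reduce to the $3\times 3$ RPS uniqueness result: once all strategies outside $\{1,2,3\}$ are eliminated (by strict dominance or by showing they cannot be best replies in equilibrium), the residual game is exactly the outward cycling RPS game $G_{123}$ with cyclic symmetry, whose unique Nash equilibrium is $\bn_{123}$ and which, being a bimatrix game with a unique equilibrium, admits no asymmetric equilibria either (Jansen, 1981; Norde, 1999). The main obstacle I anticipate is the elimination step: because the $\{5,6,7\}$ block is itself an RPS game (so no single one of $5,6,7$ dominates another), I cannot knock them out pairwise; instead I must use that against the opponent's equilibrium mixture the whole block is inferior, which requires controlling the induced payoffs carefully and using the $+10/-10$ entries in the strategy-$4$ row and column to force a contradiction whenever $\{4,5,6,7\}$ is used. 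Verifying the precise inequalities for small $\eps$ is the delicate computational heart, and this is presumably why the authors relegate the full verification to Appendix \ref{app:A}.
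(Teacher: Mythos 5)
Your overall skeleton (verify that $(\bn_{123},\bn_{123})$ is an equilibrium, show that no equilibrium puts weight on $\{4,5,6,7\}$, then invoke uniqueness of the RPS equilibrium on the residual block) matches the paper's, but the elimination step --- which you yourself identify as the heart of the proof --- has a genuine gap, and the tools you propose for it fail. No strategy of game (\ref{eq:77game}) is strictly dominated, even iteratively: strategy $4$ is the \emph{unique} best reply to $\be_3$ (it earns $2$ there, while strategies $1,2,3$ earn at most $\eps$ and strategies $5,6,7$ earn $-1/3$), so no mixture of $1,2,3$ --- indeed no mixture at all --- dominates it; strategies $5,6,7$ are the only best replies to $\be_4$; strategies $2,3,5,6,7$ are each unique best replies to some pure strategy, and strategy $1$ is a best reply to $\bn_{123}$. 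Hence iterated strict dominance deletes nothing. The analogy with the $6\times 6$ game also breaks down exactly here: in game (\ref{eq:66game}) the paper does use that $\bn_{123}$ strictly dominates $\bn_{456}$, but in game (\ref{eq:77game}) $\bn_{123}$ does \emph{not} dominate $\bn_{567}$ (against column $4$ they earn $-10$ and $+10$ respectively --- the $\pm 10$ entries you hoped to exploit are precisely what kills the dominance). So ``the whole block $\{5,6,7\}$ is inferior'' cannot be established unconditionally, and your sketch ``support on $\{5,6,7\}$ forces weight on $4$, which is dominated'' is neither proved nor correct as stated.

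What is missing is the paper's key structural lemma (Lemma \ref{lm:uniqueNash-lemgen}): because strategies $1,2,3$ earn \emph{identical} payoffs against every strategy outside their block (and likewise $5,6,7$), any equilibrium $(\bx,\by)$ in which both players put positive weight on $\{1,2,3\}$ must satisfy $x_1=x_2=x_3$ and $y_1=y_2=y_3$ --- the block restriction induces an unnormalized equilibrium of the RPS game, whose equilibrium is uniform --- and similarly for $\{5,6,7\}$ (Lemma \ref{lm:cond12}). This \emph{conditional} uniformity is what substitutes for dominance: once $y_1=y_2=y_3$, one computes $\bn_{567}\cdot\bU\by-(\bU\by)_4=y_1+10y_4+\frac{\eps}{3}(y_5+y_6+y_7)>0$, hence $x_4=0$, and the proof concludes with a four-case analysis according to which blocks receive positive weight from which player; that case analysis is also what rules out asymmetric equilibria, which your proposal only gestures at. A minor but real additional error: your verification that $(\bn_{123},\bn_{123})$ is an equilibrium uses wrong payoffs. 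Against $\bn_{123}$, strategies $1,2,3$ earn $(\eps-1)/3$, not $0$ (Lemma \ref{lm:bad} gives a strictly \emph{negative} payoff at interior points), and strategies $5,6,7$ earn $-1/3$, not $-1/3+\eps$; the conclusion survives only because $(\eps-1)/3>-1/3>-2/3$.
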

\begin{proof}
See Appendix A 
\end{proof}
In spite of Proposition \ref{prop:uniqueNash77}, the heteroclinic cycle $\Gamma_{123}$ is not asymptotically stable. Indeed, at $\be_3$, the unique best-reply is strategy $4$.  By contrast, though $(\bn_{567}, \bn_{567})$ is not an equilibrium of (\ref{eq:77game}): 
\begin{prop}
\label{prop:Gamma567}
The heteroclinic cycle $\Gamma_{567}$ is asymptotically stable. 
\end{prop}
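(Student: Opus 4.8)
The plan is to combine the asymptotic stability of $\Gamma_{567}$ \emph{within} the invariant face $F_{567}=\{\bx\in S_7: x_1=x_2=x_3=x_4=0\}$ with a transverse contraction estimate showing that strategies $1,2,3,4$ die out near $\Gamma_{567}$. First I would record the within-face behaviour: $F_{567}$ is invariant under (\ref{eq:defREP}), and the restriction of (\ref{eq:77game}) to strategies $5,6,7$ is exactly an outward cycling RPS game of type (\ref{eq:RPS-symbis}). Hence Proposition \ref{prop:REP} applies inside $F_{567}$, so $\Gamma_{567}$ is asymptotically stable relative to $F_{567}$, every face-orbit other than the rest point $\bn_{567}$ converges to it, and in particular $\Gamma_{567}$ is Lyapunov stable within $F_{567}$.

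Next I would establish transverse repulsion of the external strategies. For $\by\in F_{567}$ one has $(\bU\by)_k=-1/3+\eps$ for $k\in\{1,2,3\}$ and $(\bU\by)_4=-1/3$, because rows $1,2,3$ (resp. row $4$) are constant on columns $5,6,7$. By Lemma \ref{lm:bad} applied to the subgame, the mean payoff satisfies $\by\cdot\bU\by\le 0$ on $F_{567}$, and on each edge of $\Gamma_{567}$ it equals $(\eps-1)y_iy_j\in[(\eps-1)/4,0]$ for the two nonzero coordinates. Substituting into the per-capita growth rate $(\bU\by)_k-\by\cdot\bU\by$, I would check that for $\eps$ small (certainly $\eps<1/48$) this quantity is $\le -1/12+3\eps/4<0$ for $k\in\{1,2,3\}$ and $\le -1/12-\eps/4<0$ for $k=4$, at \emph{every} point of $\Gamma_{567}$. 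By continuity and compactness of $\Gamma_{567}$, there are a neighborhood $U$ of $\Gamma_{567}$ in $S_7$ and a constant $c>0$ with $(\bU\bx)_k-\bx\cdot\bU\bx\le -c$ for all $\bx\in U$ and all $k\in\{1,2,3,4\}$. Setting $h(\bx)=x_1+x_2+x_3+x_4$, this yields $\dot h\le -c\,h$ as long as the solution remains in $U$, so the transverse mass decays exponentially there.

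Finally I would assemble the two ingredients. Using the within-face stability to control the $\{5,6,7\}$ coordinates and the contraction $\dot h\le -c\,h$ to control the transverse ones, I would build a forward-invariant neighborhood of $\Gamma_{567}$, giving Lyapunov stability in $S_7$. Attraction then follows because, for an orbit trapped in $U$, the relation $h(t)\to 0$ forces the $\omega$-limit set into $F_{567}$, where it is a nonempty compact invariant set lying near $\Gamma_{567}$ inside its basin (and avoiding the transversally repelling rest point $\bn_{567}$), hence equal to $\Gamma_{567}$.

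I expect the main obstacle to be this last gluing step. The growth-rate estimate is negative only on a neighborhood of the cycle and \emph{not} at $\bn_{567}$, where strategies $1,2,3$ in fact grow at rate $2\eps/3>0$; one must therefore rule out that a nearby orbit drifts inward toward the center before being captured by $\Gamma_{567}$. Moreover, since a heteroclinic cycle is not normally hyperbolic (the flow stalls at the saddle vertices $\be_5,\be_6,\be_7$), Lyapunov stability cannot be read off from a linearization. The clean way around this is an average Lyapunov function argument in the spirit of Hofbauer, using the transverse coordinate $h$ (or a weighted product $x_1^{a_1}x_2^{a_2}x_3^{a_3}x_4^{a_4}$) together with the within-face stability supplied by Proposition \ref{prop:REP}.
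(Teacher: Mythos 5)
Your proposal is, in substance, the paper's own proof, which consists of three sentences: (i) $\Gamma_{567}$ is asymptotically stable relative to the face spanned by $\be_5,\be_6,\be_7$, by Proposition \ref{prop:REP}; (ii) near the vertices $\be_5$ to $\be_7$, strategies $1$ to $4$ earn less than the mean payoff, so their shares decrease; (iii) apply Thm.~17.5.1 of Hofbauer and Sigmund (1998). Your first two ingredients match, and your transverse estimate is in fact stronger than the paper's: you verify contraction at rate $-1/12+3\eps/4$ (resp.\ $-1/12-\eps/4$) along the \emph{whole} cycle, edges included, whereas the paper only checks near the vertices --- which suffices for the cited theorem, essentially because an orbit near a heteroclinic cycle spends almost all of its time near the saddles. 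The real difference is the last step: the ``average Lyapunov function argument in the spirit of Hofbauer'' that you defer to \emph{is} what the citation of Thm.~17.5.1 supplies, so you and the paper close the argument with the same tool; you are simply more explicit that the naive gluing (``within-face stability controls the $\{5,6,7\}$ coordinates'') is not yet a proof, since the face components of an off-face orbit do not follow the face flow, and Lyapunov stability relative to $F_{567}$ is not automatically robust to the $O(h)$ coupling.

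It is worth knowing that in this particular game the gap you flag can be closed without any general theorem, using the paper's own decomposition. By Lemma \ref{lm:rescale} and Eq.~(\ref{eq:rescale}), for any solution with $\mu(0)>0$ (the lemma is stated for interior solutions, but the derivation only uses $\mu>0$), the normalized vector $\hat{\bx}(t)=(x_5,x_6,x_7)/\mu$ equals $\bz(\hat{\tau}(t))$, where $\bz(\cdot)$ is an \emph{exact} solution of the replicator dynamics in the RPS game (\ref{eq:RPS-symbis}) and $\hat{\tau}$, defined in (\ref{eq:tauhat}), is nondecreasing. So Proposition \ref{prop:REP} \emph{does} control the face coordinates of off-face orbits: if $d(\hat{\bx}(0),\Gamma)$ is small, then $d(\hat{\bx}(t),\Gamma)$ stays small for all $t\geq 0$, a statement immune to the coupling because it is exact rather than perturbative. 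Consequently, for suitable $0<\delta''<\delta'$ and $\delta>0$, orbits starting in $\{h<\delta\}\cap\{d(\hat{\bx},\Gamma)<\delta''\}$ remain in $\{h\leq\delta\}\cap\{d(\hat{\bx},\Gamma)\leq\delta'\}$, a set contained in your neighborhood $U$: the $\hat{\bx}$ part stays put by RPS stability, and your estimate $\dot h\leq -c\,h$ keeps $h$ decreasing. Then $\mu=1-h\geq 1-\delta$ gives $\hat{\tau}(t)\to+\infty$, hence $\hat{\bx}(t)\to\Gamma$ and $h(t)\to 0$, i.e.\ $\bx(t)\to\Gamma_{567}$. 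This yields both stability and attraction; note also that attraction only requires the $\omega$-limit set to be \emph{contained} in $\Gamma_{567}$, so your final ``hence equal to $\Gamma_{567}$'' is an unnecessary (and unproven) strengthening --- the limit set of an orbit on the stable manifold of a vertex would be a single point of the cycle.
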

\begin{proof}
$\Gamma_{567}$ is asymptotically stable on the face spanned by $\be_5$, $\be_6$, $\be_7$ due to Proposition \ref{prop:REP}. Moreover, near the vertices $\be_5$ to $\be_7$, the  payoffs of strategies $1$ to $4$ are less than the mean payoff, hence the shares of strategies $4$ to $7$ decrease. Then apply Thm. 17.5.1 of Hofbauer and Sigmund (1998). 
\end{proof}  
Thus, if a solution 
of the replicator dynamics approaches $\Gamma_{567}$ arbitrarily closely, then it converges to it. We will show that this occurs for almost all initial conditions. 
Together with Proposition \ref{prop:uniqueNash77}, this implies that for almost all initial conditions, all pure strategies in the support of the unique equilibrium of game (\ref{eq:77game}) are eliminated. 

Roughly, if the solution starts close to the equilibrium, then it first spirals towards the heteroclinic cycle $\Gamma_{123}$. Eventually, it spends enough time close to $\be_3$, where the unique best-reply is strategy $4$, for $x_4$ to increase substantially. Since strategies $5$, $6$, and $7$ have very good payoffs again strategy $4$, this triggers a subsequent  increase in $x_5$, $x_6$ and $x_7$. The solution then cycles towards $\Gamma_{567}$. However, $x_4$ then decreases, which may lead to a come-back of strategies $1$, $2$, $3$, and the whole process might start again. The difficulty is to make sure that, each time this process runs, the solution gets closer to $\Gamma_{567}$.  

For the replicator dynamic, this can be shown due to the last important property of game (\ref{eq:77game}): against strategies $4$ to $7$, strategies $1$ to $3$ have the same payoffs. That is, for any $i$, $i'$ in $\{1,2,3\}$ and any $j$ in $\{4,5,6,7\}$, $u_{ij}=u_{i'j}$. Similarly, against strategies $1$ to $4$, strategies $5$ to $7$ have the same payoffs.  Due to linearity properties of the replicator dynamics, this implies that the dynamics may be decomposed as we now explain. 

Let $\bx(\cdot)$ be an interior solution of the replicator dynamics. For each
$i$ in $\{1,2,3\}$, define $\bar{x}_i(t)$ as the
share of strategy $i$ at time $t$ relative to the total share of
strategies $1$, $2$ and $3$:
\begin{equation}
\label{eq:REP-def-bar} \bar{x}_i:=\frac{x_i}{x_1 + x_2 +x_3}
\end{equation}
and let $\bar{\bx}=(\bar{x}_1,\bar{x}_2,\bar{x}_3)$. For $i \in \{5, 6, 7\}$, define similarly: %
\begin{equation} \label{eq:REP-def-hat}
\hat{x}_i:=\frac{x_i}{x_5 + x_6 +x_7}
\end{equation}
and let $\hat{\bx}=(\hat{x}_5,\hat{x}_6,\hat{x}_7)$. Finally, let
$$\lambda(t)=x_1(t) +x_2(t) +x_3(t)  \mbox{ and } \mu(t)=x_5(t) +x_6(t) +x_7(t)$$ 
denote respectively the total share of the three first and of the three last strategies at time $t$. 
The evolution of $\bx$ is fully described by the joint evolution of
$\bar{\bx}$, $\hat{\bx}$, $\lambda$ and $\mu$. The interest of this description is that, up to a change in velocity, $\bar{\bx}$ and $\hat{\bx}$ follow the
replicator dynamics of the Rock-Paper-Scissors game (\ref{eq:RPS-symbis}). 

Formally, let $\bar{\tau}(t)$ denote the rescaled time%
\begin{equation}
\label{eq:taubar}
\bar{\tau}(t):=\int_0^t \lambda(s)ds
\end{equation}
Let both $\bar \bU$ and $\hat \bU$ denote the payoff matrix (\ref{eq:RPS-symbis}), depending on whether it arises as the top-left or the bottom-right corner of game (\ref{eq:77game}).\footnote{The top-left and bottom-right RPS games of (\ref{eq:77game}) need not be the the same for the results to hold, this is just to minimize the number of parameters.} 
\begin{lem}
\label{lm:rescale} 
Let $\by(\cdot)$ denote the solution of (\ref{eq:defREP}) 
in the RPS game \emph{(\ref{eq:RPS-symbis})}, with initial condition $\by(0)=\bar{\bx}(0)$. We have:
\begin{equation}\label{eq:decomp} 
\dot{\bar{x}}_i=\lambda \bar{x}_i\left[(\bar{\bU}\bar{\bx})_i - \bar{\bx} \cdot
\bar{\bU}\bar{\bx}\right] \hspace{0.5 cm} \forall i=1,2,3
\end{equation}
\begin{equation}
\label{eq:decomp2}\forall t \in \R, \hspace{0.5 cm} \bar{\bx}(t)=\by(\bar{\tau}(t)) \end{equation}
 \end{lem}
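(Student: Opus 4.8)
The plan is to derive (\ref{eq:decomp}) by differentiating the definition $\bar{x}_i = x_i/\lambda$ directly, and then to obtain (\ref{eq:decomp2}) from (\ref{eq:decomp}) together with uniqueness for a time-rescaled ODE. First I would apply the quotient rule, using that $\bx(\cdot)$ solves (\ref{eq:defREP}) and that $\dot\lambda = \sum_{j=1}^3 \dot{x}_j$. Writing $g_k := (\bU\bx)_k - \bx\cdot\bU\bx$ so that $\dot{x}_k = x_k g_k$, this gives $\dot{\bar{x}}_i = \bar{x}_i g_i - \bar{x}_i \sum_{j=1}^3 \bar{x}_j g_j$. Since $\sum_{j=1}^3 \bar{x}_j = 1$, the common mean-payoff term $\bx\cdot\bU\bx$ cancels inside the bracket, leaving
\[
\dot{\bar{x}}_i = \bar{x}_i\Big[(\bU\bx)_i - \sum_{j=1}^3 \bar{x}_j\,(\bU\bx)_j\Big].
\]

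The crucial step is then to exploit the payoff-equality structure of (\ref{eq:77game}): for $i\in\{1,2,3\}$ the entries $u_{i4}=-10$ and $u_{ik}=-1/3+\eps$ for $k\in\{5,6,7\}$ do not depend on $i$. Splitting the sum defining $(\bU\bx)_i$ into the block $k\in\{1,2,3\}$ and the rest, I would write $(\bU\bx)_i = \sum_{k=1}^3 \bar{u}_{ik} x_k + C$, where $C := -10\,x_4 + (-1/3+\eps)\mu$ is \emph{independent of $i$}. Substituting $x_k = \lambda\bar{x}_k$ turns the first term into $\lambda(\bar{\bU}\bar{\bx})_i$, so $(\bU\bx)_i = \lambda(\bar{\bU}\bar{\bx})_i + C$. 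Averaging over $i$ with weights $\bar{x}_i$ (again using $\sum_{j=1}^3 \bar{x}_j =1$) gives $\sum_{j=1}^3 \bar{x}_j (\bU\bx)_j = \lambda\,\bar{\bx}\cdot\bar{\bU}\bar{\bx} + C$, so the $i$-independent term $C$ cancels in the difference, yielding $\dot{\bar{x}}_i = \lambda\,\bar{x}_i[(\bar{\bU}\bar{\bx})_i - \bar{\bx}\cdot\bar{\bU}\bar{\bx}]$, which is (\ref{eq:decomp}).

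For (\ref{eq:decomp2}) I would set $\bz(t):=\by(\bar{\tau}(t))$, where $\by$ solves the RPS replicator dynamics and $\bar{\tau}'(t)=\lambda(t)>0$; strict positivity holds because $\bx(\cdot)$ is interior, so $\bar{\tau}$ is a well-defined, strictly increasing reparametrization of time and $\by(\bar{\tau}(t))$ makes sense for every $t\in\R$. The chain rule gives $\dot{z}_i = \lambda\, z_i[(\bar{\bU}\bz)_i - \bz\cdot\bar{\bU}\bz]$, i.e. $\bz$ satisfies the very same nonautonomous system as $\bar{\bx}$ in (\ref{eq:decomp}), and $\bz(0)=\by(0)=\bar{\bx}(0)$. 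Since the right-hand side is continuous in $t$ (through $\lambda$) and Lipschitz in the state variable on the simplex, uniqueness of solutions (Cauchy--Lipschitz) forces $\bar{\bx}(t)=\bz(t)=\by(\bar{\tau}(t))$ for all $t\in\R$.

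The computation is essentially routine; the one place requiring care, and the conceptual heart of the lemma, is the cancellation of the cross-block term $C$, which is exactly what the payoff-equality property of game (\ref{eq:77game}) was engineered to produce. The time-change argument is standard, the only subtlety being to note that interiority guarantees $\lambda>0$, so that $\bar{\tau}$ is invertible and the composed function $\by(\bar{\tau}(t))$ is defined on all of $\R$.
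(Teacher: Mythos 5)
Your proposal is correct and follows essentially the same route as the paper: the paper delegates the derivation of (\ref{eq:decomp}) to Lemma 5.2 of Viossat (2007), whose proof is exactly your quotient-rule computation combined with the cancellation of the $i$-independent cross-block term, and your Cauchy--Lipschitz argument for (\ref{eq:decomp2}) is identical to the paper's observation that $\by(\bar{\tau}(t))$ and $\bar{\bx}(t)$ solve the same differential equation with the same initial condition. The only difference is that you write out explicitly the computation the paper handles by citation, which is a point in your favor rather than a divergence.
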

\begin{proof}%
The proof of (\ref{eq:decomp}) is the same as the proof of Lemma 5.2 of Viossat (2007). Due to (\ref{eq:decomp}), $\by(\bar{\tau}(t))$ and $\bar{\bx}(t)$ are
solutions of the same differential equation, which admits a unique solution through each initial condition. This proves (\ref{eq:decomp2}). 
\end{proof}
%
Similarly, if $\bz(\cdot)$ is the solution of the replicator dynamics in game (\ref{eq:RPS-symbis}) with initial condition $\bz(0)=\hat{\bx}(0)$, and $\hat{\tau}(t)$ is the rescaled time 
\begin{equation}
\label{eq:tauhat}
\hat{\tau}(t) = \int_0^t \mu(s)ds
\end{equation}
then 
\begin{equation}
\label{eq:rescale}
\forall t \in \R, \hspace{0.5 cm} \hat{\bx}(t)=\bz(\hat{\tau}(t))
 \end{equation}


We are now ready to prove the main result of this section:
\begin{prop}
\label{prop:77} For any interior initial condition
$\bx=\bx(0)$ such that neither $x_1=x_2=x_3$
nor $x_5=x_6=x_7$, the solution 
of the replicator dynamics converges to $\Gamma_{567}$. In particular, all pure strategies in the support of the unique equilibrium of game (\ref{eq:77game}) are eliminated.
\end{prop}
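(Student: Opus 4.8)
The plan is to reduce the statement to the single claim that $\mu(t)\to 1$, and then to establish this claim through the decomposition of Lemma \ref{lm:rescale}. For the reduction, recall from Proposition \ref{prop:Gamma567} that $\Gamma_{567}$ is asymptotically stable, so it suffices to show that the solution eventually enters every prescribed neighbourhood of $\Gamma_{567}$. By hypothesis $\hat{\bx}(0)$ is not the RPS rest point, so by (\ref{eq:rescale}) and Proposition \ref{prop:REP} the curve $\hat{\bx}(t)=\bz(\hat{\tau}(t))$ converges to the boundary heteroclinic cycle $\Gamma$ as soon as $\hat{\tau}(t)=\int_0^t\mu(s)\,ds\to+\infty$. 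Hence, if I can prove $\mu(t)\to 1$ (which in particular forces $\hat{\tau}(t)\to+\infty$), the solution is pushed simultaneously to $\mu=1$ and to the boundary of the $567$-face, i.e. into every neighbourhood of $\Gamma_{567}$, and convergence to $\Gamma_{567}$ then follows from asymptotic stability; since $x_1=x_2=x_3=0$ on $\Gamma_{567}$, the Nash support is eliminated. The exceptional initial conditions $x_1=x_2=x_3$ and $x_5=x_6=x_7$ are exactly those for which $\bar{\bx}(0)$ or $\hat{\bx}(0)$ is the RPS rest point, where Proposition \ref{prop:REP} fails; this is why they are excluded.

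To prove $\mu\to 1$ I would work with the three aggregates $\lambda$, $x_4$, $\mu$ (with $\lambda+x_4+\mu=1$) and their logarithmic growth rates. The equal-payoff structure gives, for $i\in\{1,2,3\}$, $(\bU\bx)_i=\lambda(\bar{\bU}\bar{\bx})_i-10x_4+(-1/3+\eps)\mu$, for $i\in\{5,6,7\}$, $(\bU\bx)_i=-\lambda/3+10x_4+\mu(\hat{\bU}\hat{\bx})_i$, and $(\bU\bx)_4=\lambda(4\bar{x}_3-2)-\mu/3$. Averaging the first within the $123$-block and the second within the $567$-block, subtracting the common mean payoff, and using that the columns of the RPS matrix sum to $\eps-1$, one obtains the clean identity
\[
\frac{d}{dt}\,\frac13\ln\frac{x_5x_6x_7}{x_1x_2x_3}=20\,x_4-\frac{\eps}{3}\lambda-\frac{2\eps}{3}\mu .
\]
This is the engine of the proof: whenever strategy $4$ is present the $567$-block gains on the $123$-block at rate $20x_4$, while it loses ground only at the tiny rate $O(\eps)$ produced by the payoff edge $-1/3+\eps$ of strategies $1,2,3$ against $5,6,7$. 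In parallel, the identity $\frac{d}{dt}\ln(x_4/\lambda)=\lambda\bigl(4\bar{x}_3-2-\bar{\bx}\cdot\bar{\bU}\bar{\bx}\bigr)+10x_4-\eps\mu$, together with $\bar{x}_3(t)=y_3(\bar{\tau}(t))$ and the change of variables $d\bar{\tau}=\lambda\,dt$, converts the sojourns of $\bar{\bx}$ near $\be_3$ into growth of $x_4$ relative to the $123$-block: the term $\int_0^{\bar{\tau}}(4y_3-3)\,d\sigma$ arising here has limit superior $+\infty$ by (\ref{eq:limsup}), the quantitative form of ``$x_4$ increases substantially each time the orbit lingers near $\be_3$''.

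With these tools the argument would proceed by excursions, splitting on whether $\bar{\tau}(\infty)=\int_0^\infty\lambda$ is finite or infinite. If it is finite then $\lambda\in L^1$, and since $\lambda$ solves a smooth equation on a compact set this forces $\lambda\to 0$; the game then reduces asymptotically to strategy $4$ against the $567$-block, where strategies $5,6,7$ earn $+10$ against $4$ while $4$ earns only $-1/3$ against them, so $x_4\to 0$ and $\mu\to 1$. If instead $\bar{\tau}\to+\infty$, I would combine the two displayed identities: the recurrent large values of $\int(4y_3-3)\,d\sigma$ make $x_4$ substantial on a recurring basis, which through the first identity drives $\tfrac13\ln(x_5x_6x_7/x_1x_2x_3)$ upward, and I would show this net drift overwhelms the $O(\eps)$ leak so that block $123$, and with it strategy $4$, is ultimately squeezed out.

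The main obstacle is precisely this last point: the dynamics is genuinely \emph{non-monotone} — as the heuristic preceding the proposition stresses, once $\mu$ is large $x_4$ decreases and strategies $1,2,3$ may stage a come-back — so a single Lyapunov function decreasing along the whole orbit seems out of reach. The delicate step is the bookkeeping guaranteeing net progress toward $\Gamma_{567}$ on each excursion: one must weigh the gain $20\int_0^t x_4\,ds$ against the leak $\tfrac{\eps}{3}\int_0^t(\lambda+2\mu)\,ds$ while tracking the gap between real time $t$ and the rescaled times $\bar{\tau},\hat{\tau}$ (note $\bar{\tau}+\hat{\tau}=t-\int_0^t x_4\,ds$), since $\lambda$ or $\mu$ can stay small over long real-time stretches. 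I expect this to be settled either by contradiction — assuming $\limsup(\lambda+x_4)>0$ and deducing, from (\ref{eq:limsup}) and the fact that $\bar{\bx}$ and $\hat{\bx}$ sweep out their whole Shapley triangles, that $x_4/\lambda\to+\infty$ and hence $\lambda\to0$ — or by an explicit estimate showing that $\int_0^t x_4\,ds$ outgrows $\eps t$, which through the first identity is designed to force block $123$ to lose all its mass to block $567$.
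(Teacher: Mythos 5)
Your reduction and your auxiliary identities are fine, and they reproduce part of the paper's apparatus: reducing to $\mu(t)\to 1$ is equivalent to the paper's reduction to $\lambda(t)\to 0$ (the paper then kills $x_4$ by the domination argument and concludes via (\ref{eq:rescale}) and Proposition \ref{prop:REP}), and your $\ln(x_4/\lambda)$ identity is exactly the paper's (\ref{eq:lambdax4ante}), which, combined with (\ref{eq:limsup}) and the change of variables, is how the paper proves its Claims \ref{cl:2} and \ref{cl:3}. The genuine gap is the step you yourself flag as unresolved, and the engine you propose for it cannot close it. Decompose your quantity as
\[
\frac13\ln\frac{x_5x_6x_7}{x_1x_2x_3}=\ln\frac{\mu}{\lambda}+\frac13\ln\frac{\hat{x}_5\hat{x}_6\hat{x}_7}{\bar{x}_1\bar{x}_2\bar{x}_3}.
\]
Both within-block products tend to zero as soon as the rescaled times diverge, since $\bar{\bx}$ and $\hat{\bx}$ then converge to their heteroclinic cycles; hence divergence of the left-hand side can be produced entirely by $\bar{x}_1\bar{x}_2\bar{x}_3\to 0$ --- which happens whenever $\bar{\tau}\to\infty$, whether or not $\lambda\to 0$ --- and gives no control on $\lambda$ at all. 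Worse, your drift $20x_4-\frac{\eps}{3}\lambda-\frac{2\eps}{3}\mu$ is negative whenever $x_4$ is small, in particular on any stretch where the orbit hugs $\Gamma_{567}$, so the quantity is not eventually monotone and the per-excursion bookkeeping you defer is precisely the hard part. Your fallbacks do not repair this: route (i) assumes only $\limsup(\lambda+x_4)>0$, which does not yield the eventual pointwise bound $\eps\mu\le\lambda+x_4$ needed to absorb the $-\eps\mu$ term in (\ref{eq:lambdax4ante}) (the paper performs that absorption only under the stronger inner hypothesis $\limsup\mu<1/(1+\eps)$, in the proof of Claim \ref{cl:3}); route (ii) is false as stated, because on the very orbits the proposition describes --- those converging to $\Gamma_{567}$, where strategy 4 earns about $-1/3$ while the mean payoff exceeds $-1/4$ --- $x_4$ decays exponentially, so $\int_0^\infty x_4\,ds<\infty$ and $\int_0^t x_4\,ds$ cannot outgrow $\eps t$.

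The idea you are missing is the one that lets the paper replace all this bookkeeping by a monotonicity argument: work with the arithmetic ratio $\ln(\mu/\lambda)$, whose derivative (\ref{eq:lambdamu}) equals $\mu\,(\hat{\bx}\cdot\hat{\bU}\hat{\bx}+\frac13-\eps)-\lambda\,(\frac13+\bar{\bx}\cdot\bar{\bU}\bar{\bx})+20x_4$, and feed it the quantitative fact that along the heteroclinic cycle of (\ref{eq:RPS-symbis}) the mean payoff stays above $-1/4$, whence (\ref{eq:hautleshat}) once $\hat{\tau}\to\infty$. Together with Lemma \ref{lm:bad}, this makes the coefficient of $\mu$ at least $\frac1{12}-2\eps>0$ eventually, i.e., the leak near $\Gamma_{567}$ that plagues your product ratio simply is not present for $\mu/\lambda$, and one gets (\ref{eq:lambdamubis}): $\frac{d}{dt}\ln(\mu/\lambda)\ge\mu/24-\lambda/3$ for $\eps\le 1/48$. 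Once Claim \ref{cl:3} (which your $\ln(x_4/\lambda)$ tools essentially reprove) supplies a time with $\mu/\lambda\ge 16$, the ratio increases forever, and $\int\lambda=\infty$ (Claim \ref{cl:1}) forces $\ln(\mu/\lambda)\to\infty$, hence $\lambda\to 0$, the desired contradiction. In short: correct reduction, correct identities, but the decisive mechanism --- the mean-payoff lower bound along the cycle driving a monotone ratio of block masses --- is absent, and what you propose in its place does not suffice.
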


\begin{proof}
The assumptions imply that $\bar{\bx}(0)$ and $\hat{\bx}(0)$ are well defined and different from
$(\frac{1}{3},\frac{1}{3},\frac{1}{3})$. We must show that $\bx(t)$ converges to $\Gamma_{567}$.  

If $\lambda(t) \to 0$, then $\bx(t)$ converges to the face $\lambda=0$.  Since on this face the payoff of strategy $4$ is strictly smaller than the payoff of $\bn_{567}$, standard, domination-like arguments imply that $x_4(t) \to 0$, hence $\mu(t) \to 1$ (see, e.g., Samuelson and Zhang, 1992). Due to (\ref{eq:rescale}) and Proposition \ref{prop:REP}, this implies that $\bx(t) \to \Gamma_{567}$ and we are done. Thus it suffices to show that $\lambda(t)$ 
converges to zero. 


Assume by contradiction that this is not the case. 

\begin{cl} \label{cl:1} 
Recall that $\bar{\tau}(t)= \int_0^t \lambda(s)ds$. We have: $\bar{\tau}(t)  \to +\infty$ as $t \to +\infty$. 
\end{cl}
\begin{proof}
$\lambda(t) \nrightarrow 0$ and $\lambda$ is clearly Lipschitz.
\end{proof}
\begin{cl} \label{cl:2} $\limsup_{t \to +\infty} \int_0^t \left[4x_3(s) - 3\lambda(s)\right] \, ds = +\infty$ 
\end{cl}
\begin{proof}
By  (\ref{eq:decomp2}), definition of 
$\bar{\tau}$, and a change of variable,
$$\int_0^t \lambda(s) \bar{\bx}(s) \, ds = \int_0^t \dot{\bar{\tau}}(s) \by(\bar{\tau}(s))  \, ds= \int_0^{\bar{\tau}(t)} \by(s) \, ds.$$
Therefore
$$\int_0^t (4x_3(s) - 3\lambda(s)) \, ds= \int_0^{t} \lambda(s) (4\bar{x}_3(s) - 3) \, ds = \int_0^{\bar{\tau}(t)} (4y_3(s) - 3) \, ds$$
Since $\by(0) \neq (1/3, 1/3, 1/3)$, the result follows from Claim \ref{cl:1} and Eq. (\ref{eq:limsup}). 
\end{proof}
\begin{cl} \label{cl:3} 
$\displaystyle \limsup_{t \to +\infty} \mu(t) \geq
\frac{1}{1+\eps}$
\end{cl}
\begin{proof}
Using (\ref{eq:defREP}), an easy computation shows that 
\begin{equation}
\label{eq:lambdax4ante}%
\frac{d}{dt}\ln\left(\frac{x_4}{\lambda}\right)= 4x_3 + 10 x_4 - 2\lambda
- \lambda \bar{\bx} \cdot \bar{\bU}\bar{\bx} -\eps\mu  \geq 4x_3   + 10 x_4 - 2\lambda -\eps\mu
\end{equation}  
where the inequality follows from Lemma \ref{lm:bad}. 

Assume by contradiction that $\limsup_{t \to +\infty} \mu(t) < \frac{1}{1+\eps}$. Thus, omitting time arguments, there exists a time $T$ such that %
for all $t \geq T$, $(1+\eps)\mu < 1= \mu + \lambda +
x_4$ hence $\eps \mu \leq \lambda + x_4$. Together
with (\ref{eq:lambdax4ante}), this implies that for $t \geq T$:
\begin{equation}
\label{eq:lambdax4}%
\frac{d}{dt}\ln\left(\frac{x_4}{\lambda}\right) \geq 4 x_3 + 9 x_4 - 3\lambda  \geq 4x_3 - 3\lambda%
\end{equation}
By Claim \ref{cl:2}, it follows that $\limsup \ln(x_4/\lambda)=+\infty$. Thus, there exists a time $T'>T$ such that $x_4 \geq \lambda$. Due to the first inequality in (\ref{eq:lambdax4}), for $t \geq T'$, $x_4$ remains greater than $\lambda$ and $\frac{d}{dt}\ln\left(\frac{x_4}{\lambda}\right) \geq 9x_4 - 3\lambda \geq 6\lambda$. 
 %
%
By Claim \ref{cl:1}, this implies that $x_4/\lambda \to +\infty$ hence $\lambda \to 0$, a contradiction.
\end{proof}

We now conclude. Recall the definition of $\hat{\tau}$ in (\ref{eq:tauhat}). A corollary of Claim \ref{cl:3} is that 
$\hat{\tau}(t) 
\to +\infty$ as $t \to +\infty$. By (\ref{eq:rescale}) and Proposition \ref{prop:REP}, it follows that $\hat{\bx}$
converges to the heteroclinic cycle of game (\ref{eq:RPS-symbis}). It is easy to check that along this cycle, the mean payoff is always greater than $-\frac{1}{4}$. Therefore:
\begin{equation}
\label{eq:hautleshat} %
\exists T_1 \geq 0,\, \forall t \geq T_1,\, \hat{\bx}(t) \cdot \hat{\bU}\hat{\bx}(t) \geq -\frac{1}{4} - \eps%
\end{equation}%
Moreover, (\ref{eq:defREP}) and a somewhat tedious computation show that: 
\begin{equation}
\label{eq:lambdamu}%
\frac{d}{dt}\ln\left(\frac{\mu}{\lambda}\right)%
=\mu \, \left(\hat{\bx}\cdot\hat{\bU}\hat{\bx}
+\frac{1}{3}-\eps\right)- \lambda.\left(\frac{1}{3}+\bar{\bx} \cdot \bar{\bU}\bar{\bx}\right) + 20 x_4 %
\end{equation}
Assuming $\eps \leq \frac{1}{48}$, (\ref{eq:hautleshat}), (\ref{eq:lambdamu}) and Lemma \ref{lm:bad} imply that
for $t \geq T_1$:
\begin{equation}
\label{eq:lambdamubis}%
\frac{d}{dt}\ln\left(\frac{\mu}{\lambda}\right)\geq\frac{\mu}{24} -\frac{\lambda}{3}%
\end{equation}
It follows from Claim \ref{cl:3} 
that there exists a time
$T_2 \geq T_1$ at which the ratio $\mu/\lambda$ is greater than
$16$. By (\ref{eq:lambdamubis}), this ratio then keeps increasing hence, by (\ref{eq:lambdamubis}) again, 
\begin{equation}
\forall t \geq T_2,\hspace{0.5 cm}
\frac{d}{dt}\ln\left(\frac{\mu}{\lambda}\right)(t)\geq
\frac{16\lambda}{24} -\frac{\lambda}{3} \geq \frac{\lambda(t)}{3}
\end{equation}
By Claim \ref{cl:1}, this implies that $\lambda$ goes to
zero, a final contradiction.
\end{proof}

\emph{Perturbation of payoffs. } 
As for game (\ref{eq:66game}), any game sufficiently close to game (\ref{eq:77game}) in the payoff space has a unique Nash equilibrium, and its 
support is $\{1, 2, 3\} \times \{1, 2, 3\}$. We conjecture that the result of Proposition \ref{prop:77} generalizes to such nearby games. That is, for almost all initial conditions, the solution of the replicator dynamics converges to the boundary of the face spanned by $\be_5$, $\be_6$ and $\be_7$, hence all pure strategies in the support of the unique Nash equilibrium are eliminated. Our proof does not go through however, because Lemma \ref{lm:rescale} requires a very specific payoff structure.

\emph{Correlated equilibrium. } By contrast with the games of Viossat (2007, 2008), the Nash equilibrium of games (\ref{eq:66game}) and (\ref{eq:77game}) is not the unique correlated equilibrium. Whether reasonable dynamics may eliminate all strategies used in correlated equilibrium for almost all initial conditions is an open question.  

\emph{Other dynamics. }  A variant of Lemma \ref{lm:rescale} holds for the discrete-time replicator dynamics:
\begin{equation}\label{eq:drep}
x_i(n+1)=x_i(n) \frac{C+ (\bU\bx)_i}{C + \bx \cdot \bU\bx} \mbox{ with } C > - \min_{i,\mathbf{x}} (\bU\bx)_i 
\end{equation}
Thus, extending Proposition \ref{prop:77} to (\ref{eq:drep}) should be relatively simple. 
Proposition \ref{prop:77} might also extend to some classes of payoff functionnal dynamics 
\begin{equation}
\label{eq:pf}
\dot{x}_i= x_i \left[f([\bU \bx]_i) - \sum_j x_j f([\bU\bx]_j) \right] 
\end{equation}
and $f$ an increasing and sufficiently smooth function from $\R$ to $\R$. This might be hard to prove though, as Lemma \ref{lm:rescale} builds on linearity properties which are specific of the replicator dynamics.\footnote{One reason to hope for a generalization is that, in Rock-Paper-Scissors games, close to the equilibrium, dynamics (\ref{eq:pf}) behave as the replicator dynamics (Hofbauer and Sigmund, 1998, exercice 8.1.1; Viossat, 2011, footnote 6). }

Finally, there is a strong link between the best-reply dynamics and the time-average of the replicator dynamics (Gaunersdorfer and Hofbauer, 1995; Hofbauer et al., 2009). For this reason, 
we conjecture that Proposition \ref{prop:66} extends to (\ref{eq:defREP}); that is, in game (\ref{eq:66game}), for almost all initial conditions, all strategies in the support of the equilibrium are eliminated under (\ref{eq:defREP}). What we can show, in the same spirit, is that Proposition \ref{prop:77} extends to the best-reply dynamics, up to replacement of the heteroclinic cycle $\Gamma_{567}$ by the corresponding Shapley triangle: $$ST_{567}:=\left\{\bx \in S_7: x_5 +x_6 +x_7=1 \mbox{ and } \max_{5 \leq i \leq 7} (\bU\bx)_i=0\right\}$$
\begin{prop} \label{prop:universal2BR}
Assume that $0 < \eps < 2/9$. For any initial condition $\bx$
such that neither $x_1=x_2=x_3$ nor $x_5=x_6=x_7$, all solutions
of the best-reply dynamics converge to the Shapley triangle
$ST_{567}$.
\end{prop}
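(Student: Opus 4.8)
The plan is to mirror the proof of Proposition \ref{prop:66}, exploiting the better-reply structure of game (\ref{eq:77game}) together with the improvement principle (Lemma \ref{lm:ip}). First I would record the directed graph of better-replies (the analog of Fig.~1), obtained by comparing each entry $u_{ij}$ to the diagonal $u_{ii}=0$. Within $\{1,2,3\}$ the best-reply succession is $1 \to 2 \to 3$, and since $u_{13}=\eps>0$ and $u_{43}=2>0$ while $u_{53}=u_{63}=u_{73}=-1/3<0$, from strategy $3$ the only better-replies are $1$ and $4$. From strategy $4$ the only better-replies are $5,6,7$ (as $u_{54}=u_{64}=u_{74}=10>0$ and $u_{14}=u_{24}=u_{34}=-10<0$). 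Within $\{5,6,7\}$ the succession is $5\to 6\to 7\to 5$, with no better-reply outside $\{5,6,7\}$. Hence $\{5,6,7\}$ is a trap, and the only way to leave $\{1,2,3\}$ is through $3\to 4$.

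The second ingredient is the best-reply analog of Lemma \ref{lm:rescale}. Because rows $1,2,3$ coincide on columns $4,\dots,7$, whenever the unique best-reply is some $i\in\{1,2,3\}$ one has $\dot{\bx}=\be_i-\bx$, so $x_4$ and $\mu=x_5+x_6+x_7$ decay like $e^{-t}$, $\dot\lambda=1-\lambda$ hence $\lambda\to 1$, and a short computation shows that in the rescaled time $d\tau=\lambda\,dt$ the normalized profile $\bar{\bx}$ obeys exactly the best-reply dynamics of the $\{1,2,3\}$ Rock-Paper-Scissors game; the symmetric statement holds for $\hat{\bx}$ on $\{5,6,7\}$. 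Moreover $\hat{\bx}$ stays constant as long as the best-reply lies in $\{1,2,3,4\}$, so the exclusion $x_5=x_6=x_7$ keeps $\hat{\bx}\neq(1/3,1/3,1/3)$ throughout that phase. Both exclusions are genuinely needed: when a projection sits at the RPS center the corresponding three strategies remain permanently tied, and a solution may keep that projection centered, so that it cannot approach the Shapley triangle.

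I would then argue in two steps. \emph{Trap step:} if at some time the unique best-reply belongs to $\{5,6,7\}$, the improvement principle confines all later best-replies to $\{5,6,7\}$, so $\mu\to 1$ while $x_1,x_2,x_3,x_4\to 0$; when the best-reply is $i\in\{5,6,7\}$, $v=\max_{5\le k\le 7}(\bU\bx)_k=(\bU\bx)_i$ satisfies $\dot v=\be_i\cdot\bU(\be_i-\bx)=-v$, exactly as in Case~1 of Proposition \ref{prop:66}, and $W(\bx)=\max_{5\le k\neq \ell\le 7}[(\bU\bx)_k-(\bU\bx)_\ell]$ is bounded away from $0$ so that direction changes do not accumulate; thus $v\to 0$, and since $\hat{\bx}(0)\neq(1/3,1/3,1/3)$ Proposition \ref{prop:BR-RPS} gives $\hat{\bx}\to ST$, whence $\bx\to ST_{567}$. \emph{Reaching-the-trap step:} if the best-reply is ever $4$, the previous point shows we enter $\{5,6,7\}$ and are done; otherwise the best-reply stays in $\{1,2,3\}$, and since $\bar{\bx}(0)\neq(1/3,1/3,1/3)$ the decomposition and Proposition \ref{prop:BR-RPS} force $\bar{\bx}\to ST_{123}$ with $\lambda\to 1$, so $\bar{\bx}$ passes arbitrarily close to the vertex $\bar{\bq}$ of (\ref{eq:qbis}). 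There $(\bU\bar{\bq})_4=2(1-\eps-\eps^2)/(1+\eps+\eps^2)>0$ while the payoffs of $1,2,3$ are $\le 0$; hence, with $\lambda$ near $1$ and $x_4,\mu$ near $0$, strategy $4$ becomes the strict unique best-reply, contradicting that the best-reply stays in $\{1,2,3\}$. So every solution leaves $\{1,2,3\}$ through $4$ in finite time and then enters the trap.

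The main obstacle is making the reaching-the-trap step work for \emph{all} solutions rather than merely for almost all initial conditions: at the degenerate instants where two strategies become best-replies simultaneously (the analog of Subcase~2.1) a solution may branch, so one cannot invoke uniqueness. The point to push is that the escape to strategy $4$ is eventually \emph{forced}: once $\bar{\bx}$ enters the open region around $\bar{\bq}$ where $(\bU\bx)_4$ strictly exceeds every other payoff, strategy $4$ is the unique best-reply and no branch can avoid it, which is why the estimate $(\bU\bar{\bq})_4>0$ is the crux. Verifying the sign pattern of the better-reply graph and this inequality throughout $0<\eps<2/9$, and checking that the mean payoff along the $\{5,6,7\}$ cycle stays in the range needed for the trap estimates, are the remaining routine computations.
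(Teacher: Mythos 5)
Your overall architecture (the better-reply graph, the $\{5,6,7\}$ trap via the improvement principle, a decomposition lemma, and the escape through strategy $4$ near the Shapley-triangle vertex $\bar{\bq}$) matches the paper's, and your computations of the graph and of $(\bU\bq)_4=2(1-\eps-\eps^2)/(1+\eps+\eps^2)$ are correct. But there is a genuine gap in your reaching-the-trap step, and it is exactly where the proposition's ``all solutions'' claim (as opposed to ``almost all initial conditions'') is hard. Your trichotomy --- unique best reply in $\{5,6,7\}$ at some time, or best reply equal to $4$ at some time, or ``the best-reply stays in $\{1,2,3\}$'' --- is not exhaustive: since best replies need not be unique, a solution may at \emph{every} time keep some strategy of $\{1,2,3\}$ among its weak best replies while simultaneously putting weight on strategies of $\{4,5,6,7\}$. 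In such a persistent-tie regime none of your three cases applies: $\lambda$ does not tend to $1$, $x_4$ and $\mu$ need not tend to $0$, and your ``forced escape'' collapses, because the strict optimality of strategy $4$ near $\bar{\bq}$ requires $\lambda$ near $1$ and $x_4,\mu$ near $0$. The paper's proof is built precisely around this difficulty: Lemma \ref{lm:BR-proof77-exists} shows by contradiction that at some time \emph{no} strategy of $\{1,2,3\}$ is even a weak best reply, via a case analysis on whether $\lambda(t)\to 0$ and $\mu(t)\to 0$; the hard subcase (both $\lambda\not\to 0$ and $\mu\not\to 0$) uses the decomposition for \emph{both} $\bar{\bx}$ and $\hat{\bx}$ to derive the two inequalities (\ref{eq:a}) and (\ref{eq:b}), which are incompatible when $\eps<2/9$. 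That is where the hypothesis $\eps<2/9$ enters; your attribution of it to the positivity of $(\bU\bq)_4$ cannot be right, since that quantity is positive for all $\eps<(\sqrt{5}-1)/2$. Without a counterpart of this tie-regime estimate, your argument essentially reproves the almost-everywhere statement in the style of Proposition \ref{prop:66}, not Proposition \ref{prop:universal2BR}.

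Two smaller points. First, your rescaling is wrong: the change of velocity for the best-reply dynamics is $d\tau=\left(1+\dot{\lambda}/\lambda\right)dt$ (Lemma \ref{lm:BR-decomp}), which equals $dt/\lambda$ when the best reply lies in $\{1,2,3\}$; the rescaling $d\tau=\lambda\,dt$ is the one for the \emph{replicator} dynamics (Lemma \ref{lm:rescale}). Second, you state the decomposition only at times when the unique best reply lies in $\{1,2,3\}$; the paper's version holds for almost every $t$ whatever the best-reply set is, and this generality is not a convenience: it is what allows one to control $\bar{\bx}$ and $\hat{\bx}$ through the tie regimes described above, i.e., it is the tool that fills the gap.
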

\begin{proof} See Appendix \ref{app:BR}. Compared to Proposition \ref{prop:66}, the added difficulty is to deal with initial conditions through which several solutions of (\ref{eq:defBR}) exist. This can be done due to a decomposition of the best-reply dynamics similar to Lemma \ref{lm:rescale}.\end{proof}

\section{Discussion}
\label{sec:discussion} In game (\ref{eq:77game}), the Nash
equilibrium is unique and quasi-strict, and therefore persistent,
regular, 
hence strongly stable, essential, strictly proper, strictly perfect, etc. (van Damme, 1991)
Thus, from the traditional, rationalistic point of view, it may be seen as the
unambiguous solution of the game. However, under two of the most studied dynamics, 
all strategies in the support of this Nash equilibrium are eliminated from almost all
initial conditions. This indicates an even wider gap between
strategic and evolutionary considerations that had been noted
before.

We conjecture that elimination of all strategies in the
support of Nash equilibria from almost all initial conditions occurs for many other dynamics, 
including multi-population dynamics. 
However, this might be hard to prove because this can only arise in relatively large games, in which having a precise understanding of dynamics more complex than the replicator dynamics or the best-reply dynamics might prove difficult. A way forward might be to consider nonlinear games and to replace, in the construction, Rock-Paper-Scissors games by hypnodisk games (Hofbauer and Sandholm, 2011).



 \begin{appendix}
 
 \section{Equilibrium uniqueness}
 \label{app:A}
 
In this section, we show that games (\ref{eq:66game}) and  (\ref{eq:77game}) have a unique equilibrium. We begin with a lemma used in both proofs. 

Consider a symmetric bimatrix game with pure strategy set $I=\{1,2,...,N\}$ and payoff matrix $\bU$. Let $I' \subset I$. For any $\bx$ in
$S_N$, define $\bx' \in \R_+^N$ by $x'_i=x_i$ if $i \in I'$ and $x'_i=0$ otherwise. Let $x(I')=\sum_{i \in I'} x_i$.

\begin{lem}
\label{lm:uniqueNash-lemgen} Let $(\bx,\by)$ be a Nash
equilibrium such that $x(I')y(I')>0$. Assume that against $\bx-\bx'$ and $\by-\by'$, the payoffs of a strategy $i$ in $I'$ is independent of $i$. That is, for all $i$ and $j$ in $I'$,
\begin{equation}
\label{eq:uniqueNash-lemgen} 
[\bU(\by-\by')]_i=[\bU(\by-\by')]_j \mbox{ and } [\bU(\bx-\bx')]_i=[\bU(\bx-\bx')]_j
\end{equation}
Then $(\bx',\by')$ induces an unnormalized Nash equilibrium of
the game restricted to $I' \times I'$. That is, for all $i$, $j$ in $I'$:
\begin{equation} \label{eq:induce} x'_i>0 \Rightarrow (\bU\by')_i \geq (\bU\by')_j \quad \mbox { and } \quad y'_i>0 \Rightarrow (\bU\bx')_i \geq (\bU\bx')_j \end{equation}
\end{lem}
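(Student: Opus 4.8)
The plan is to unwind the definition of Nash equilibrium for the full game and show that the inequalities required for $(\bx', \by')$ to be an unnormalized equilibrium on $I' \times I'$ follow from the full-game optimality conditions, using the payoff-independence hypothesis (\ref{eq:uniqueNash-lemgen}) to replace $\bU\by$ by $\bU\by'$ up to a constant on $I'$. The key observation is that for any $i \in I'$, we can split $\bU\by = \bU\by' + \bU(\by - \by')$, and by hypothesis the second term $[\bU(\by-\by')]_i$ takes a common value, say $c$, for all $i \in I'$. So for $i,j \in I'$ we have $(\bU\by)_i - (\bU\by)_j = (\bU\by')_i - (\bU\by')_j$: the differences of payoffs within $I'$ are the same whether computed against $\by$ or against $\by'$.

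\emph{First} I would use $x(I')y(I') > 0$ to guarantee that the supports of $\bx'$ and $\by'$ are nonempty, so the claimed implications are not vacuous. \emph{Next}, fix $i, j \in I'$ with $x'_i > 0$, i.e.\ $x_i > 0$. Since $(\bx, \by)$ is a Nash equilibrium, strategy $i$ is a best reply to $\by$ in the full game, so $(\bU\by)_i \geq (\bU\by)_k$ for every $k \in I$, in particular for $k = j \in I'$. Thus $(\bU\by)_i \geq (\bU\by)_j$, and by the payoff-difference identity above this is exactly $(\bU\by')_i \geq (\bU\by')_j$, which is the first implication in (\ref{eq:induce}). The second implication, for $\by$, follows by the symmetric argument: if $y'_i > 0$ then $y_i > 0$, strategy $i$ is a best reply to $\bx$ among all pure strategies, hence in particular $(\bU\bx)_i \geq (\bU\bx)_j$, and the second identity in (\ref{eq:uniqueNash-lemgen}) converts this into $(\bU\bx')_i \geq (\bU\bx')_j$.

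\emph{The main point to get right} is the bookkeeping around the payoff-difference identity: the hypothesis (\ref{eq:uniqueNash-lemgen}) does \emph{not} say that $(\bU\by')_i = (\bU\by')_j$, only that the correction terms $[\bU(\by - \by')]_i$ agree across $I'$, so one must be careful to cancel these common correction terms rather than the payoffs themselves. There is no real obstacle here beyond this care; the lemma is essentially a linear-algebra restatement of the fact that adding a strategy-independent constant to the payoffs on $I'$ does not change which strategies in $I'$ are best replies. I expect the whole argument to be short, with the only subtlety being to invoke the full-game best-reply inequality only for comparisons $k = j \in I'$ (the hypothesis says nothing useful about payoffs of strategies outside $I'$, so no conclusion about those is attempted or needed).
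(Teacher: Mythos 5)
Your proof is correct and follows essentially the same route as the paper's: use $x'_i>0 \Rightarrow x_i>0$ to get the full-game best-reply inequality $(\bU\by)_i \geq (\bU\by)_j$, then use hypothesis (\ref{eq:uniqueNash-lemgen}) to cancel the common correction term $[\bU(\by-\by')]_i$ and conclude $(\bU\by')_i \geq (\bU\by')_j$, with the second implication by symmetry. The paper's proof is just a more compressed statement of exactly this argument.
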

\begin{proof}
Let $i \in I'$. If $x'_i>0$ then $x_i>0$, hence strategy $i$ is a best-reply to $\by$. Together with (\ref{eq:uniqueNash-lemgen}) this implies that for all $j$ in $I'$, 
$(\bU\by')_i-(\bU\by')_j= (\bU\by)_i-(\bU\by)_j \geq 0$. This proves the first part of (\ref{eq:induce}). The second part is symmetric.
\end{proof}

\noindent \textbf{Proof of proposition \ref{prop:uniqueNash66}.} 
Let $(\bx,\by)$ be a Nash equilibrium of (\ref{eq:66game}). We want to show that $\bx=\by=\bn_{123}=(1/3, 1/3, 1/3, 0, 0, 0)$. 

\emph{Step 1. } $x_4x_5x_6=0$ and by symmetry $y_4y_5y_6=0$.\\
Indeed, if $x_4x_5x_6>0$, then strategies $4$, $5$ and $6$ are all
best replies to $\by$, hence so is $\bn_{456}$. This cannot be because, as is easily checked, 
$\bn_{456}$ is strictly dominated by $\bn_{123}$.

\emph{Step 2. } $y_1+y_2+y_3>0$ and by symmetry $x_1 + x_2 + x_3>0$.\\
Assume by contradiction that $y_1=y_2=y_3=0$. It follows that
\begin{equation}
\label{eq:proofNash66-point-iv} \forall i \in \{1,2,3\},
(\bU\by)_i=-1<0
\end{equation}
Furthermore, due to Step 1, $\by$ has support in $\{4,5\}$,
$\{5,6\}$ or $\{4,6\}$. In any case, there exists $i$ in
$\{4,5,6\}$ such that $(\bU\by)_i\geq 0$. Together with
(\ref{eq:proofNash66-point-iv}), this implies that strategies $1$,
$2$ and $3$ are not best replies to $\by$, hence
$x_1=x_2=x_3=0$. Thus, both $\bx$
and $\by$ have support in $\{4,5,6\}$, hence $(\bx,\by)$ induces a Nash
equilibrium of the game restricted to $\{4,5,6\} \times
\{4,5,6\}$. This implies that $\bx=\by=\bn_{456}$, which
contradicts Step 1. 

\emph{Step 3. } $x_1=x_2=x_3$ and $y_1=y_2=y_3$.\\
Let $\bx_{123}=(x_1,x_2,x_3,0,0,0)$ and
$\bx_{456}=\bx-\bx_{123}=(0,0,0,x_4,x_5,x_6)$. Define $\by_{123}$ and $\by_{456}$
symmetrically. For every $i$ and $j$ in $\{1,2,3\}$, we have $(\bU\bx_{456})_i=(\bU\bx_{456})_j$ and $(\bU\by_{456})_i=(\bU\by_{456})_j$. 
Therefore, if follows from Step 2 and from Lemma \ref{lm:uniqueNash-lemgen} applied with $I'=\{1, 2, 3\}$ that 
$(\bx_{123}, \by_{123})$ is an unnormalized Nash equilibrium of the game restricted to $\{1,2,3\} \times \{1,2,3\}$. Therefore $\bx$ and $\by$ are both proportional to
$\bn_{123}$. 

\emph{Step 4. } $x_4 + x_5 +x_6=0$ and by symmetry $y_4 + y_5 + y_6=0$.\\
Assume by contradiction that $x_4 + x_5 + x_6>0$. Against $\bn_{123}$, every strategy $i$ in $\{4,5,6\}$ earns the same payoff: $-5/3$. Thus, by Step 3, for every $i$ and $j$ in $\{4,5,6\}$, we have
$(\bU\bx_{123})_i=(\bU\bx_{123})_j$ and $(\bU\by_{123})_i=(\bU\by_{123})_j$.
Together with lemma \ref{lm:uniqueNash-lemgen} with $I'=\{4,5,6\}$, this implies that if $y_4 + y_5 + y_6>0$ then $\bx_{456}$ and $\by_{456}$ are
proportional to $\bn_{456}$, hence $x_4x_5x_6>0$. This cannot be due to Step 1. Therefore,  $y_4 + y_5+ y_6=0$. But then, by Step 3, $\by=\bn_{123}$. Therefore, strategies $4$, $5$ and $6$ are not
best replies to $\by$. Therefore $x_4=x_5=x_6=0$.

Proposition \ref{prop:66} now follows from Steps 3 and 4. 
\finpreuve\\

\noindent \textbf{Proof of proposition \ref{prop:uniqueNash77}. } 
Recall the definition of $\bn_{123}$ and $\bn_{567}$: 
$$\bn_{123}=\left(\frac{1}{3},\frac{1}{3},\frac{1}{3},0,0,0,0\right) \quad ; \quad \bn_{567}=\left(0,0,0,0,\frac{1}{3},\frac{1}{3},\frac{1}{3}\right).$$
Let $(\bx,\by)$ be a Nash equilibrium of (\ref{eq:77game}). Consider the conditions:
\begin{equation}
\label{eq:cond1} x_1 +x_2+x_3>0 \mbox{ and } y_1 + y_2 + y_3>0
\end{equation}
\begin{equation}
\label{eq:cond2} x_5 +x_6+x_7>0 \mbox{ and } y_5 + y_6 + y_7>0
\end{equation}
Note that, due to Lemma \ref{lm:uniqueNash-lemgen}:
\begin{lem}
\label{lm:cond12} If $(\ref{eq:cond1})$ holds, then $x_1=x_2=x_3$ and $y_1=y_2=y_3$. If $(\ref{eq:cond2})$ holds, then $x_5=x_6=x_7$ and $y_5=y_6=y_7$. 
\end{lem}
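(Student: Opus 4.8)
The plan is to derive both implications as direct applications of Lemma \ref{lm:uniqueNash-lemgen}, taking $I'=\{1,2,3\}$ for the first and $I'=\{5,6,7\}$ for the second, and then to invoke the uniqueness of the Nash equilibrium of the Rock-Paper-Scissors game (\ref{eq:RPS-symbis}).

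For the first implication I would set $I'=\{1,2,3\}$, so that $\bx'$ and $\by'$ are the restrictions of $\bx$ and $\by$ to the first three coordinates. Condition (\ref{eq:cond1}) says exactly that $x(I')>0$ and $y(I')>0$, hence $x(I')y(I')>0$. To verify the payoff-independence hypothesis (\ref{eq:uniqueNash-lemgen}), note that $\bx-\bx'$ and $\by-\by'$ are supported on $\{4,5,6,7\}$; inspecting the matrix (\ref{eq:77game}), strategies $1$, $2$ and $3$ earn identical payoffs against each of strategies $4,5,6,7$ (namely $-10$ against strategy $4$ and $-1/3+\eps$ against each of $5,6,7$). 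Thus $[\bU(\by-\by')]_i$ and $[\bU(\bx-\bx')]_i$ are independent of $i \in I'$, and Lemma \ref{lm:uniqueNash-lemgen} applies.

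The lemma then tells me that $(\bx',\by')$ induces an unnormalized Nash equilibrium of the game restricted to $\{1,2,3\}\times\{1,2,3\}$, which is precisely the RPS game (\ref{eq:RPS-symbis}). Normalizing, i.e.\ dividing $\bx'$ and $\by'$ by the positive scalars $x(I')$ and $y(I')$, preserves the best-reply inequalities (\ref{eq:induce}) and yields a genuine Nash equilibrium of (\ref{eq:RPS-symbis}). Since that game has a unique Nash equilibrium, namely the completely mixed profile $(1/3,1/3,1/3)$ for both players, the normalized vectors both equal $(1/3,1/3,1/3)$, and therefore $x_1=x_2=x_3$ and $y_1=y_2=y_3$.

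The second implication is entirely symmetric, with $I'=\{5,6,7\}$: condition (\ref{eq:cond2}) supplies the positivity $x(I')y(I')>0$, and a glance at (\ref{eq:77game}) shows that strategies $5$, $6$, $7$ earn identical payoffs against each of strategies $1,2,3,4$ (namely $-1/3$ against $1,2,3$ and $10$ against $4$), so (\ref{eq:uniqueNash-lemgen}) holds with $\bx-\bx'$, $\by-\by'$ now supported on $\{1,2,3,4\}$. The restricted game is again (\ref{eq:RPS-symbis}), whose unique equilibrium forces $x_5=x_6=x_7$ and $y_5=y_6=y_7$. I expect no genuine obstacle here: the whole content is packaged inside Lemma \ref{lm:uniqueNash-lemgen}, and the only points requiring care are the routine verification of the two payoff-independence conditions from the matrix and the normalization step needed to pass from an unnormalized equilibrium to the honest RPS equilibrium whose uniqueness is invoked.
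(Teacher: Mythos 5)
Your proposal is correct and takes essentially the same route as the paper: the paper derives Lemma \ref{lm:cond12} directly from Lemma \ref{lm:uniqueNash-lemgen} applied with $I'=\{1,2,3\}$ and $I'=\{5,6,7\}$, exactly as you do. Your write-up merely makes explicit the details the paper leaves implicit (checking the payoff-independence hypothesis (\ref{eq:uniqueNash-lemgen}) from the matrix (\ref{eq:77game}), normalizing the unnormalized equilibrium, and invoking uniqueness of the equilibrium of the RPS game (\ref{eq:RPS-symbis})), all of which are done correctly.
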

\noindent Now examines 4 cases, depending on whether (\ref{eq:cond1}) and (\ref{eq:cond2}) hold or not: 

\emph{Case 1. If $(\ref{eq:cond1})$ holds. } Then, by lemma \ref{lm:cond12}, $y_1=y_2=y_3$. Therefore $\bn_{567} \cdot \bU \by > (\bU \by)_4$, hence $x_4=0$. By symmetry, $y_4=0$.

\emph{Subcase 1.1.  If furthermore $(\ref{eq:cond2})$ holds. } Then
by lemma \ref{lm:cond12}, $y_5=y_6=y_7$. Since $y_4=0$
and $y_1=y_2=y_3$, it follows that $\by$ is a convex combination
of $\bn_{123}$ and $\bn_{567}$. Against both of these strategies, the payoff of $\bn_{123}$ is strictly greater than the payoff of strategies $5$, $6$ and $7$. 
Thus, the latter cannot be best-replies to $\by$, hence $x_5 +x_6 +x_7=0$. This contradicts (\ref{eq:cond2}).

\emph{Subcase 1.2. If $(\ref{eq:cond2})$ does not hold. } Without
loss of generality, assume that $y_5 +y_6 +y_7=0$. Since  $y_4=0$
and $y_1=y_2=y_3$, this implies that $\by=\bn_{123}$. Therefore,
as above, none of the strategies $5$, $6$ and $7$ is a
best reply to $\by$. Therefore $x_5 +x_6 +x_7=0$ which by the
same argument implies $\bx=\bn_{123}$. Therefore,
$\bx=\by=\bn_{123}$.

\emph{Case 2. If $(\ref{eq:cond1})$ does not hold. } %
Without loss of generality, assume $x_1 +x_2 +x_3=0$. This implies
that $\bn_{567}$ is a strictly better response to $\bx$ than
strategy $4$. Thus, $y_4=0$.

\emph{Subcase 2.1. If furthermore (\ref{eq:cond2}) holds. } Then
$\by$ is a convex combination of $\bn_{567}$ and strategies
$1,2,3$. This implies that $\bn_{123}$ is a strictly better
response to $\by$ than either $5$, $6$ or $7$. Therefore,
$x_5=x_6=x_7=0$, contradicting (\ref{eq:cond2}).

\emph{Subcase 2.2. If $(\ref{eq:cond2})$ does not hold. } Then
$x_5+x_6 +x_7=0$ or $y_5 +y_6 +y_7=0$. In the latter case, since
$y_4=0$, it follows that $\by$ has support in $\{1,2,3\}$, hence
that $\bn_{123}$ is a strictly better response to $\by$ than
either $5$, $6$, or $7$; therefore, in any case, $x_5 +x_6
+x_7=0$. Since we assumed $x_1+x_2 +x_3=0$, it follows that
$\bx=\be_4$. 
Therefore, $\by$ must have support in $\{5,6,7\}$. It follows that $\bx$ is not a best-reply to $\by$, a contradiction.  

Summing up, only subcase 1.2 is possible, and then $\bx=\by=\bn_{123}$.\finpreuve
\section{Best-reply dynamics in the $7 \times 7$ game (\ref{eq:77game})} 
\label{app:BR}
This section proves Proposition  \ref{prop:universal2BR}. Recall the notation of Section \ref{sec:REP}: $\lambda$, $\mu$, $\bar{\bx}$, $\hat{\bx}$, $\bar{\bU}$ and $\hat{\bU}$. 
Consider a solution of (\ref{eq:defBR}) in game (\ref{eq:77game}) such that initially neither $x_1=x_2=x_3$ nor $x_5=x_6=x_7$. 
Thus, $\lambda(0)>0$,  $\mu(0)>0$, $\bar \bx (0)\neq (1/3, 1/3, 1/3)$ and $\hat \bx (0)\neq (1/3, 1/3, 1/3)$. This implies that $\lambda(t)$ and $\mu(t)$ are positive for all $t \geq 0$, 
as 
they can decrease at most exponentially.

We first show that, up to a change of velocity, $\bar{\bx}$ and
$\hat{\bx}$ follow the best-reply dynamics in the RPS game (\ref{eq:RPS-symbis}). Below, $BR(\cdot)$ denotes the best-reply correspondence in game (\ref{eq:RPS-symbis}).
\begin{lem} \label{lm:BR-decomp}
For almost all times $t$:
\begin{equation*} 
\dot{\bar{\bx}} \in \left(1 + \frac{\dot{\lambda}}{\lambda}\right)
\left(BR(\bar{\bx}) - \bar{\bx}\right)
\quad \mbox{ and } \quad \dot{\hat{\bx}} \in \left(1 + \frac{\dot{\mu}}{\mu}\right)\left(
BR(\hat{\bx}) - \hat{\bx}\right)
\end{equation*}
\end{lem}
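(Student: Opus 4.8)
The plan is to lift the proof of Lemma~\ref{lm:rescale} from the replicator equation to the best-reply \emph{inclusion}, relying on the same payoff structure of~(\ref{eq:77game}): against strategies $4$ to $7$ the strategies $1,2,3$ earn identical payoffs, and against strategies $1$ to $4$ the strategies $5,6,7$ earn identical payoffs. I treat only $\bar{\bx}$; the claim for $\hat{\bx}$ is symmetric. Recall also that $\lambda(t),\mu(t)>0$ for all $t\geq 0$, so $\bar{\bx}$ and $\hat{\bx}$ are ratios of absolutely continuous functions with denominator locally bounded away from zero, hence absolutely continuous, and the quotient rule is valid for almost all $t$.

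First I would fix a time $t$ at which $\bx(\cdot)$ is differentiable and $\dot{\bx}=\boldf-\bx$ with $\boldf\in BR(\bx)$ in the full game; this holds for almost all $t$. Writing $\phi:=f_1+f_2+f_3$ for the weight $\boldf$ places on $\{1,2,3\}$ and using $\dot{\lambda}=\phi-\lambda$, the quotient rule gives, after simplification,
\[
\dot{\bar{x}}_i=\frac{f_i}{\lambda}-\bar{x}_i\left(1+\frac{\dot{\lambda}}{\lambda}\right),\qquad 1+\frac{\dot{\lambda}}{\lambda}=\frac{\phi}{\lambda}.
\]
If $\phi=0$ then $1+\dot{\lambda}/\lambda=0$ and every $f_i$ (for $i\leq 3$) vanishes, so $\dot{\bar{\bx}}=0$ and the inclusion holds trivially. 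If $\phi>0$, then with $g_i:=f_i/\phi$ one gets $\dot{\bar{\bx}}=(1+\dot{\lambda}/\lambda)(\mathbf{g}-\bar{\bx})$ where $\mathbf{g}=(g_1,g_2,g_3)\in S_3$, so everything reduces to proving $\mathbf{g}\in BR(\bar{\bx})$ in the RPS game~(\ref{eq:RPS-symbis}).

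This reduction is the crux, and I expect the support bookkeeping to be the only delicate point. Using the payoff structure, I would decompose, for each $i\in\{1,2,3\}$,
\[
(\bU\bx)_i=\lambda\,(\bar{\bU}\bar{\bx})_i+K,
\]
where the first term comes from the top-left block being $\bar{\bU}$ and $K=\sum_{j\geq 4}u_{ij}x_j$ is independent of $i$ because $u_{ij}$ does not depend on $i\in\{1,2,3\}$ for $j\geq 4$. Since $\lambda>0$, this affine relation identifies the strategies of $\{1,2,3\}$ maximizing $(\bU\bx)_i$ over the block with the pure best-replies to $\bar{\bx}$ in the RPS game. Now $\phi>0$ forces each $i$ in the support of $\mathbf{g}$ to lie in $BR(\bx)$, so it attains $\max_{1\leq k\leq 7}(\bU\bx)_k$ and a fortiori $\max_{1\leq k\leq 3}(\bU\bx)_k$; hence each such $i$ is a best reply to $\bar{\bx}$ in the RPS game. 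Therefore $\mathbf{g}$ is supported on RPS best-replies, i.e.\ $\mathbf{g}\in BR(\bar{\bx})$, which closes the argument; the statement for $\hat{\bx}$ follows by exchanging the roles of the two blocks and of $\lambda$ and $\mu$.
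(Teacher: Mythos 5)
Your proof is correct and takes essentially the same route as the paper's: the same case split on whether the best reply $\boldf$ places weight on $\{1,2,3\}$, the same algebraic identity $\dot{\bar{\bx}}=\left(1+\dot{\lambda}/\lambda\right)(\mathbf{g}-\bar{\bx})$ with $\mathbf{g}$ the normalized restriction of $\boldf$ to the block, and the same use of the equal-payoffs structure to conclude $\mathbf{g}\in BR(\bar{\bx})$. The only difference is presentational: where the paper invokes ``a variant of Lemma~\ref{lm:uniqueNash-lemgen}'', you verify that step directly via the decomposition $(\bU\bx)_i=\lambda(\bar{\bU}\bar{\bx})_i+K$ with $K$ independent of $i$.
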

\begin{proof}
We prove the first part. The proof of the second part is the same. Let $\bb \in BR(\bx(t))$ such that $\dot{\bx}(t)=\bb-\bx(t)$. 

\noindent \emph{Case 1: if $b_i=0$ for all $i=1,2,3$.} Then $\dot{x}_i=-x_i$ for all $i=1,2,3$. This implies that $\dot{\bar{\bx}}=0$ and 
that $\dot{\lambda}=-\lambda$, so that 
the result holds trivially.

\noindent \emph{Case 2.} Otherwise, define $\bar{\bb}$ as $\bar{\bx}$. A few lines of algebra show that, independently of the payoffs: 
\begin{equation}
\dot{\bar{\bx}} = \left(1
+ \frac{\dot{\lambda}}{\lambda}\right)\left(\bar{\bb} -
\bar{\bx}\right)
\end{equation}
Moreover, since all strategies in $\{1, 2, 3\}$ earn the same payoffs against strategies in $\{4, 5, 6, 7\}$, 
a variant of Lemma \ref{lm:uniqueNash-lemgen} shows that $\bar{\bb} \in BR(\bar{\bx})$, hence the result.
\end{proof}
 Recall the definition of the Shapley triangle in (\ref{eq:STsym}). 
 \begin{lem} \label{cl:barx-toST} 
If $\lambda(t)$ (resp. $\mu(t)$) does not converge to $0$,  then the limit set of $\bar{\bx}(t)$ (resp. $\hat{\bx}(t)$) is the Shapley triangle (\ref{eq:STsym}) hence $\max_i (\bar{\bU}\bar{\bx})_i \to 0$ (resp. $\max_i (\hat{\bU}\hat{\bx})_i \to 0$).
\end{lem}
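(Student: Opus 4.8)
The plan is to exploit Lemma \ref{lm:BR-decomp}, which says that $\bar{\bx}$ follows the best-reply dynamics of the RPS game (\ref{eq:RPS-symbis}) up to a time change with velocity $1 + \dot{\lambda}/\lambda$, and then to invoke Proposition \ref{prop:BR-RPS}. Concretely, I would introduce the rescaled time
\[
\sigma(t) := \int_0^t \left(1 + \frac{\dot{\lambda}(s)}{\lambda(s)}\right) ds = t + \ln\frac{\lambda(t)}{\lambda(0)}.
\]
This is legitimate: on every compact interval $\lambda$ is bounded away from $0$ (the best-reply dynamics give $\dot{\lambda} \geq -\lambda$, hence $\lambda(t) \geq \lambda(0)e^{-t}$), so $\ln\lambda$ is absolutely continuous there and $\sigma$ is well defined and absolutely continuous. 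Writing $\dot{\lambda} = (b_1+b_2+b_3) - \lambda$ for the relevant best-reply $\bb$ shows $1 + \dot{\lambda}/\lambda = (b_1+b_2+b_3)/\lambda \geq 0$, so $\sigma$ is nondecreasing.

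First I would show $\sigma(t) \to +\infty$. Since $\sigma$ is nondecreasing, it admits a limit in $(0,+\infty]$; if this limit were a finite $M$, then $\ln\lambda(t) = \sigma(t) - t + \ln\lambda(0) \leq M - t + \ln\lambda(0) \to -\infty$, forcing $\lambda(t) \to 0$ and contradicting the hypothesis $\lambda \nrightarrow 0$. Hence $\sigma(t) \to +\infty$.

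Next I would check that, reparametrized by $\sigma$, the curve $\bar{\bx}$ is a genuine solution of the best-reply dynamics in (\ref{eq:RPS-symbis}). Define $\by$ by $\by(\sigma(t)) = \bar{\bx}(t)$; this is unambiguous because on any interval where $\sigma$ is constant one has $(b_1+b_2+b_3)/\lambda = 0$ almost everywhere, i.e. $b_1=b_2=b_3=0$, which is Case 1 in the proof of Lemma \ref{lm:BR-decomp} and forces $\dot{\bar{\bx}}=0$, so $\bar{\bx}$ is constant exactly on the plateaus of $\sigma$. Dividing the inclusion of Lemma \ref{lm:BR-decomp} by the velocity $d\sigma/dt$ gives $d\by/d\sigma \in BR(\by) - \by$ with $\|d\by/d\sigma\|$ bounded by the diameter of $S_3$, so $\by$ is Lipschitz and solves the best-reply dynamics of the RPS game, with $\by(0) = \bar{\bx}(0) \neq (1/3,1/3,1/3)$. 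Since the game is outward cycling, Proposition \ref{prop:BR-RPS} yields that the limit set of $\by(\sigma)$ as $\sigma \to +\infty$ is the Shapley triangle $ST$ of (\ref{eq:STsym}). Because $\sigma(t) \to +\infty$, the limit set of $\bar{\bx}(t)$ as $t \to +\infty$ coincides with that of $\by$, hence is $ST$; in particular the distance from $\bar{\bx}(t)$ to $ST$ tends to $0$, and since $\bx \mapsto \max_i(\bar{\bU}\bx)_i$ is continuous and vanishes on $ST$, we get $\max_i(\bar{\bU}\bar{\bx})_i \to 0$. The argument for $\hat{\bx}$ and $\mu$ is identical, using the second inclusion of Lemma \ref{lm:BR-decomp}.

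The main obstacle is the reparametrization step. Because $\sigma$ need not be strictly increasing, one cannot simply invert it, and one must verify both that $\bar{\bx}$ is stationary exactly on the plateaus of $\sigma$ (handled by Case 1 of Lemma \ref{lm:BR-decomp}) and that the change of variables genuinely transports the differential inclusion — including the absolute continuity of $\by$ — to the new time scale. Everything else, namely the growth of $\sigma$ and the final continuity argument, is routine.
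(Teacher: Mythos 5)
Your proof is correct and takes essentially the same route as the paper's: the paper introduces the same rescaled time $\tau(t)=t+\ln\left(\lambda(t)/\lambda(0)\right)$, notes it is nondecreasing (since $\dot{\lambda}\geq-\lambda$) and tends to $+\infty$ when $\lambda(t) \nrightarrow 0$, identifies $\bar{\bx}(t)=\by(\tau(t))$ via Lemma \ref{lm:BR-decomp}, and concludes by Proposition \ref{prop:BR-RPS}. Your extra work on the plateaus of $\sigma$ and on transporting the differential inclusion through the time change just makes explicit the uniqueness-based identification that the paper states in one line.
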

 
\begin{proof}
We only prove the first part (with $\lambda(t)$). The proof of the second part is the same. Let
$\by(\cdot)$ be the unique solution of the best-reply
dynamics in game  (\ref{eq:RPS-sym}) with initial condition
$\by(0)=\bar{\bx}(0)$. Let $\tau(t)$ denote the rescaled time:
\begin{equation} \label{eq:deftau}
\tau(t):= \int_{0}^t \left(1 +
\frac{\dot{\lambda}}{\lambda}(s) \right) \, ds=t +
\ln\left(\frac{\lambda(t)}{\lambda(0)}\right)
\end{equation}
Note that $\tau(t)$ is nondecreasing as, due to
(\ref{eq:defBR}), $\dot{\lambda} \geq - \lambda$. 
Moreover $\limsup \lambda(t) >0$, hence $\tau(t) \to +\infty$ by (\ref{eq:deftau}). 
Furthermore, it follows from (\ref{lm:BR-decomp}) that for all $t \geq 0$, 
$\bar{\bx}(t)=\by(\tau(t))$. 
The result now follows from Proposition \ref{prop:BR-RPS}. 
\end{proof}

\begin{lem}
\label{lm:BR-proof77-exists} There exists a time $T>0$
such that none of the strategies $1$, $2$ and $3$ is a
best-reply to $\bx(T)$. 
\end{lem}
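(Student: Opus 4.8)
The plan is to argue by contradiction. Suppose that for every $t \geq 0$ at least one of the strategies $1,2,3$ is a best-reply to $\bx(t)$, i.e. $\max_{i \in \{1,2,3\}}(\bU\bx(t))_i = \max_{1 \leq j \leq 7}(\bU\bx(t))_j$ for all $t$. The aim is to force the profile arbitrarily close to $(\bar q_1, \bar q_2, \bar q_3, 0,0,0,0)$, where $\bar{\bq}$ is the vertex \eqref{eq:qbis} of the Shapley triangle closest to $\be_3$, because there strategy $4$ strictly outperforms each of $1,2,3$: its payoff is $-2\bar q_1 - 2\bar q_2 + 2\bar q_3 = \tfrac{2(1-\eps-\eps^2)}{1+\eps+\eps^2} > 0$ for $\eps<2/9$, while $\max_{i \in \{1,2,3\}}(\bar{\bU}\bar{\bq})_i = 0$. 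Getting the full profile near this configuration contradicts the hypothesis.

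First I would rule out $\lambda(t) \to 0$. Since $\bar{\bx}$ and $\hat{\bx}$ each follow the best-reply dynamics of the RPS game up to a time change (Lemma \ref{lm:BR-decomp}), Lemma \ref{cl:barx-toST} applies to both. If $\lambda \to 0$, then keeping a best-reply in $\{1,2,3\}$ requires $\max_{i \in \{1,2,3\}}(\bU\bx)_i \geq \max_{k \in \{5,6,7\}}(\bU\bx)_k$; writing both sides through the decomposition and letting $\lambda \to 0$, this forces $\hat{\bx}$ into a fixed neighborhood of its centre $(\tfrac13,\tfrac13,\tfrac13)$, where $\max_k(\hat{\bU}\hat{\bx})_k \leq -\tfrac13+\eps$. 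But Lemma \ref{cl:barx-toST} applied to $\hat{\bx}$ gives either $\mu \to 0$, whence $x_4 \to 1$ and near $\be_4$ strategies $5,6,7$ (payoff $10$) beat $1,2,3$ (payoff $-10$); or $\hat{\bx} \to ST_{567}$, which is disjoint from that neighborhood. Either way no strategy of $\{1,2,3\}$ is eventually a best-reply, a contradiction. Hence $\lambda \not\to 0$, and by Lemma \ref{cl:barx-toST} the limit set of $\bar{\bx}$ is the whole Shapley triangle \eqref{eq:STsym}; in particular $\bar{\bx}(t_n) \to \bar{\bq}$ along some $t_n \to +\infty$.

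Next I would show $\lambda \to 1$. If the selected best-reply $\bb \in BR(\bx)$ is always a vertex $\be_i$ with $i \in \{1,2,3\}$ (the generic situation imposed by the improvement principle, Lemma \ref{lm:ip}, and the cyclic structure of the block $\{1,2,3\}$), then $\dot x_4 = -x_4$ and $\dot\mu = -\mu$ hold throughout, so $x_4(t), \mu(t) \to 0$ and $\lambda(t) \to 1$. Together with $\bar{\bx}(t_n) \to \bar{\bq}$ this makes $\bx(t_n) \to (\bar q_1, \bar q_2, \bar q_3, 0,0,0,0)$, where, by the first paragraph and continuity, strategy $4$ strictly beats each of $1,2,3$ for $n$ large — the desired contradiction, so a time $T$ with no best-reply in $\{1,2,3\}$ must exist.

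I expect the main obstacle to be the third step, since the hypothesis only makes some strategy of $\{1,2,3\}$ tied for best and does not prevent the best-reply selection from feeding strategies $4$--$7$ at instants of ties, which could keep $x_4$ or $\mu$ from vanishing. Controlling this needs the decomposition of Lemma \ref{lm:BR-decomp} together with a case analysis of the ties, exactly in the spirit of the proof of Proposition \ref{prop:66} (and Viossat, 2008). The most delicate is the $\{3,4\}$-tie: there the solution points towards the edge $\be_3\be_4$, and since strategy $4$ strictly dominates strategy $3$ in the $2\times 2$ block $\bigl(\begin{smallmatrix} 0 & -10 \\ 2 & 0 \end{smallmatrix}\bigr)$, strategy $4$ at once becomes the unique best-reply while $3$ drops out — which, far from obstructing the argument, itself produces a time with no best-reply in $\{1,2,3\}$.
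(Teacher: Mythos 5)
Your contradiction setup, your elimination of the case $\lambda(t)\to 0$, and your target contradiction (strategy $4$ strictly beats strategies $1$, $2$, $3$ when $\bar{\bx}$ is near $\bar{\bq}$ and $\mu$, $x_4$ are small) all match the paper's proof --- they are its Case 1 and Subcase 2.1. The genuine gap is your third step. The contradiction hypothesis only says that at each time \emph{some} strategy of $\{1,2,3\}$ attains the maximal payoff; it gives no control over which point of $BR(\bx)$ the solution selects, and (\ref{eq:defBR}) is an inclusion, so ties need not be instantaneous. Nothing prevents a solution from sliding along a tie set: for instance, when strategies $3$ and $5$ are both best replies, the selection $\bb=\alpha \be_3 + (1-\alpha)\be_5$ with $\alpha=(1/3-\eps)/(2/3-\eps)$ keeps $(\bU\bx)_5-(\bU\bx)_3$ equal to zero, so this tie can persist on a whole interval while $x_5$ (hence $\mu$) is being fed. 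Then $\dot x_4=-x_4$ and $\dot\mu=-\mu$ fail, $\mu \not\to 0$, and your route to $\lambda\to 1$ collapses; the improvement principle cannot rescue it, since Lemma \ref{lm:ip} requires a \emph{unique} best reply on an interval. Note also that your tie analysis only covers ties with strategy $4$, which indeed resolve instantly by domination; ties with strategies $5$, $6$, $7$ are the real obstruction, and they do not.

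The repair you suggest --- handling ties ``exactly in the spirit of the proof of Proposition \ref{prop:66}'' --- is not available here: there, tie configurations were discarded because they occur only for a measure-zero set of \emph{initial conditions}, whereas Lemma \ref{lm:BR-proof77-exists} must hold for \emph{all} solutions from every admissible initial condition (Proposition \ref{prop:universal2BR} carries no ``almost all''; the paper states that coping with multiplicity of solutions is precisely the added difficulty of Appendix \ref{app:BR}). The paper's own proof is selection-independent: the hypothesis is used only through the payoff inequalities (\ref{eq:123good}), which hold for any selection, and the case your argument cannot reach --- $\lambda\not\to 0$ \emph{and} $\mu\not\to 0$ simultaneously --- is resolved by noting that both $\bar{\bx}$ and $\hat{\bx}$ then converge to their Shapley triangles (Lemma \ref{cl:barx-toST}), so that along times where $\bar{\bx}\approx\bar{\bq}$ the two inequalities become (\ref{eq:a}) and (\ref{eq:b}), and these are quantitatively incompatible when $\eps<2/9$: (\ref{eq:a}) forces $\lambda/\mu$ to be small while (\ref{eq:b}) forces it to be large. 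Your proposal has no counterpart to this step, and it is the crux of the lemma.
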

\begin{proof}
Assume by contradiction that for all $t \geq 0$, 
\begin{equation}
\label{eq:123good}
(i) \, (\bU\bx)_4 - \max_{1 \leq i \leq 3} (\bU \bx)_i  \leq 0 \mbox{ and } (ii) \, \max_{5 \leq i \leq 7} (\bU \bx)_i - \max_{1 \leq i \leq 3} (\bU \bx)_i \leq 0
\end{equation}
Note that the payoff of a strategy $i$ in $\{1, 2, 3\}$ may be written as 
\begin{equation}
\label{eq:123} (\bU\bx)_i=\lambda (\bar{\bU}\bar{\bx})_i - 10x_4 + \mu (-1/3 + \eps).
\end{equation} 
Similarly, for all $j$ in $\{5, 6, 7\}$,
\begin{equation}
\label{eq:567} (\bU\bx)_j= -\lambda/3 + 10x_4 + \mu (\hat{\bU}\hat{\bx})_{j'}, \mbox{with $j'=j+4$}. 
\end{equation} 
Note also that since $\be_4$ is not a best-reply to itself, $x_4(t)$ cannot converge to $1$. Now examine the following cases.
 
\emph{Case 1: if $\lambda(t) \to 0$}. Then $\mu(t) $ does not converge to $ 0$, thus it follows from $\lambda(t) \to 0$, (\ref{eq:567}) and Lemma \ref{cl:barx-toST} that
$$\limsup \max_{5 \leq i \leq 7} (\bU \bx)_i = \limsup \left(10 x_4 + \mu \max_{i} (\hat{\bU} \hat{\bx})_i\right) \geq 0$$ 
while $\limsup \max_{1 \leq i \leq 3} (\bU \bx)_i \leq - 1/3 +\eps<0$. This contradicts (ii) in (\ref{eq:123good}).

\emph{Case 2: if $\lambda(t) $ does not converge to $ 0$.} Then by Lemma \ref{cl:barx-toST}, $\bar \bx$ converges to the Shapley triangle (\ref{eq:STsym}) and there is a increasing sequence $(t_n)$ with $t_n \to +\infty$ such that $\bar \bx(t_n) \to \bar{\bq}$, where $\bar{\bq}$ is the vertex of the Shapley triangle defined in (\ref{eq:qbis}). 

\emph{Subcase 2.1. If $\mu(t) \to 0$.} Together with $\bar{\bx}(t_n) \to \bar{\bq}$, this implies that for $n$ large enough, strategy $4$ is a strictly better reply to $\bx(t_n)$ than strategies 1,2 and 3; this contradicts  part (i) of (\ref{eq:123good}).

\emph{Subcase 2.2. If $\mu(t) $ does not converge to $ 0$.} Then by Lemma \ref{cl:barx-toST} and (\ref{eq:STsym}), $\max_i (\bar{\bU}\bar{\bx})_i \to 0$ and $\max_i (\hat{\bU}\hat{\bx})_i \to 0$.  Together with (\ref{eq:123}), (\ref{eq:567}),  their analog for strategy $4$ and (\ref{eq:123good}),  this implies that along the sequence $(t_n)$:
\begin{equation}
\label{eq:a}
(\bU\bx)_4 - \max_{1 \leq i \leq 3} (\bU\bx)_i  = \lambda \left[ (\bU\bq)_4 - o(1)\right] + 10 x_4 - \eps \mu \leq 0
\end{equation}
where $\bq=(\bar{q}_1, \bar{q}_2, \bar{q}_3, 0, 0, 0, 0)$, and  
\begin{equation}
\label{eq:b}
\max_{5 \leq i \leq 7} (\bU\bx)_i - \max_{1 \leq i \leq 3} (\bU\bx)_i  = -\frac{\lambda}{3} + 20 x_4 + \mu\left(\frac{1}{3} - \eps \right)  + o(1) \leq 0
\end{equation}

Roughly, (\ref{eq:a}) implies that $\lambda/\mu$ should be small and (\ref{eq:b}) that $\lambda/\mu$ should be large. Assuming conservatively that $x_4=0$, hence $\mu=1-\lambda$, these equations may be shown to be incompatible for $\eps < 2/9$.
\end{proof}

We now conclude. By Lemma \ref{lm:BR-proof77-exists}, there exists a time $T$ such that none of the strategies $1$, $2$ and $3$ is a best-reply to $\bx(T)$. Moreover, due to Lemma \ref{lm:BR-decomp},  for all $t \geq 0$, $\hat{\bx}(t) \neq (1/3, 1/3, 1/3)$, hence by a variant of Lemma \ref{lm:uniqueNash-lemgen}, strategies $5$, $6$ and $7$ cannot all be best-replies to $\bx(t)$. Thus, due to the cyclic symmetry of strategies $5$ to $7$, we may assume that the set of pure best-replies to $\bx(T)$ is one of the followings: 

\emph{Case 1}: $\{5\}$ or $\{5,6\}$ ; \emph{Case 2}:  $\{4, 6\}$ or $\{4, 5, 6\}$ ; or \emph{Case 3}: $\{4\}$ 

In Case 1, the same arguments as in Proposition \ref{prop:66} show that $\bx(t) \to ST_{567}$. In Case 2, since strategy $4$ is strictly dominated by strategy $6$ on the face spanned by $\be_4$, $\be_5$ and $\be_6$, it immediately ceases to be a best-reply. This leads to Case 1. In Case 3, since $\be_4$ is not a best-reply to itself, there exists a first time $T'>T$ at which $\be_4$ is not the unique best-reply to $\bx(t)$. Due to the improvement principle (Lemma \ref{lm:ip}), none of the strategies $1$, $2$ and $3$ is a best-reply to $\bx(T')$. Thus we are back to Case 2. This concludes the proof. 

\end{appendix}
\begin{small}
 
\end{small}
%

%

\end{document}